\documentclass[11pt,a4paper]{article}
\usepackage[T1]{fontenc}
\usepackage[utf8]{inputenc}
\usepackage{lmodern}
\usepackage{microtype}
\usepackage[
  a4paper,
  top=27mm,
  bottom=27mm,
  left=27mm,
  right=27mm
]{geometry}
\usepackage{amsmath,amssymb,amsthm,mathtools}
\usepackage{mathrsfs}
\usepackage{bm}
\usepackage{booktabs}
\usepackage{tabularx}
\usepackage{enumitem}
\usepackage{xcolor}
\usepackage{authblk}
\usepackage[
  colorlinks=true,
  linkcolor=blue!55!black,
  citecolor=blue!55!black,
  urlcolor=blue!55!black
]{hyperref}

\hypersetup{
  pdftitle={
    Deterministic Minimum-Output-Entropy Nonadditivity
    via Haagerup's Inequality and Near-Free Permutation Representations
  },
  pdfauthor={Guocheng Zhen, Chengkai Zhu, Ranyiliu Chen, and Xin Wang},
  pdfsubject={
    Exact quadratic certificate and asymptotic fixed-parameter algorithmic
    construction of minimum-output-entropy nonadditivity
  },
  pdfkeywords={
    minimum output entropy,
    additivity violation,
    free probability,
    Haagerup inequality,
    permutation lifts,
    branch Gram matrices,
    Holevo superadditivity,
    fixed-parameter algorithms
  }
}

\usepackage[nameinlink,noabbrev]{cleveref}
\setlist{nosep}
\allowdisplaybreaks
\theoremstyle{plain}
\newtheorem{theorem}{Theorem}[section]
\newtheorem{proposition}[theorem]{Proposition}
\newtheorem{lemma}[theorem]{Lemma}
\newtheorem{corollary}[theorem]{Corollary}

\theoremstyle{definition}

\theoremstyle{remark}
\newtheorem{remark}[theorem]{Remark}

\newcommand{\C}{\mathbb C}
\newcommand{\R}{\mathbb R}

\newcommand{\F}{\mathbb F}
\newcommand{\one}{\mathbf 1}

\newcommand{\Tr}{\operatorname{Tr}}
\newcommand{\tr}{\operatorname{tr}}
\newcommand{\spec}{\operatorname{spec}}
\newcommand{\supp}{\operatorname{supp}}
\newcommand{\rank}{\operatorname{rank}}

\newcommand{\Dens}{\mathcal D}
\newcommand{\B}{\mathcal B}

\newcommand{\Hmin}{H_{\min}}
\newcommand{\HS}{\mathrm{HS}}

\newcommand{\ket}[1]{\lvert #1\rangle}
\newcommand{\bra}[1]{\langle #1\rvert}
\newcommand{\proj}[1]{\lvert #1\rangle\!\langle #1\rvert}

\newcommand{\ad}{\operatorname{Ad}}

\newcolumntype{Y}{>{\raggedright\arraybackslash}X}

\title{
  \textbf{Deterministic Minimum-Output-Entropy Nonadditivity via Haagerup's Inequality and Near-Free Permutation Representations}
}

\author[1]{Guocheng Zhen\thanks{zhenguocheng0814@gmail.com}}
\author[2]{Chengkai Zhu\thanks{zhuchengkai7@gmail.com}}
\author[3]{Ranyiliu Chen\thanks{chenranyiliu@quantumsc.cn.}}
\author[1]{Xin Wang\thanks{felixxinwang@hkust-gz.edu.cn.}}
\affil[1]{\small Thrust of Artificial Intelligence, Information Hub,\par The Hong Kong University of Science and Technology (Guangzhou), Guangdong 511453, China}
\affil[2]{\small QudeLeap Research, Shanghai 200030, China}
\affil[3]{\small Quantum Science Center of Guangdong-Hong Kong-Macao Greater Bay Area, Shenzhen 518045, China}
\date{}

\begin{document}

\maketitle

\begin{abstract}
We give a deterministic realization of the finite-dimensional quadratic
certificate underlying Collins's mixed-unitary proof of
minimum-output-entropy nonadditivity. For every fixed integer $K\ge2$ and rational $\eta>0$ satisfying
$\log K>2(3+\eta)^2$, a deterministic polynomial-time algorithm, for
every sufficiently large target size $N$, outputs $K$ permutations on
$N'=N+o_{K,\eta}(N)$ points.  Restricting their permutation matrices to
the nontrivial standard representation yields real orthogonal
Stinespring blocks and a channel
$\Phi_{N'}:M_{N'-1}(\C)\to M_K(\C)$ such that
\[
   2\Hmin(\Phi_{N'})
   -\Hmin(\Phi_{N'}^{\otimes2})
   \ge
   \frac{\log K}{K}
   -2\log\left(1+\frac{(3+\eta)^2}{K}\right)
   >0.
\]
The construction combines Haagerup's length-two inequality with the
simultaneous deterministic spectral approximation of O'Donnell and Wu.

We further show that the constant $3$ is asymptotically sharp on the
relevant Hermitian zero-diagonal coefficient class and that the finite
spectral transfer is nearly saturated, thereby isolating the finer
geometry of the full output body as the natural next level of refinement
beyond the scalar-radius method.  Finally, a standard covariant extension
converts the same deterministic entropy gap exactly into self-tensor
superadditivity of the one-shot Holevo quantity.
\end{abstract}

\newpage
\tableofcontents

\section{Introduction}\label{sec:introduction}

For finite-dimensional quantum channels $\Phi$ and $\Psi$, product inputs
always give
\[
   \Hmin(\Phi\otimes\Psi)
   \le
   \Hmin(\Phi)+\Hmin(\Psi),
\]
where $\Hmin$ denotes the minimum output von Neumann entropy. The minimum-output-entropy
additivity conjecture asked whether equality must hold for every pair of
channels.  This order-one problem is particularly important because of its
connection with classical communication over quantum channels. The Holevo--Schumacher--Westmoreland
theorem expresses the unassisted classical capacity through the
regularization of the Holevo quantity
\cite{Holevo1998,SchumacherWestmoreland1997}.  Shor subsequently proved
that, as universal statements over finite-dimensional quantum channels,
minimum-output-entropy additivity at $p=1$ is equivalent to additivity of
the Holevo quantity, as well as to additivity and strong superadditivity of
entanglement of formation \cite{Shor2004}.  Universal additivity would
therefore have reduced the corresponding regularized capacity formula to a
single-letter one.

Hastings disproved the $p=1$ conjecture by a random finite-dimensional
construction \cite{Hastings2009}.  In the mixed-unitary formulation,
one-copy outputs remain close to maximally mixed, whereas a maximally
entangled input to a channel and its complex conjugate creates a
lower-entropy two-copy output through collisions of matching unitary
branches.  Subsequent work clarified and sharpened this probabilistic and
geometric mechanism
\cite{FukudaKingMoser2010,BrandaoHorodecki2010,AubrunSzarekWerner2011,
BelinschiCollinsNechita2016}.  Collins later gave a particularly economical
free-probabilistic proof: strong asymptotic freeness replaces large Haar
unitaries by free Haar unitaries, while Haagerup's inequality controls the
resulting quadratic free-group expressions
\cite{CollinsMale2014,Collins2018}.  A key feature of Collins's argument is
that randomness is used only to produce a finite unitary tuple satisfying a
uniform family of quadratic operator-norm estimates; once such a tuple is
available, the entropy comparison is deterministic.

Away from $p=1$, the development has followed a different pattern.  Werner
and Holevo gave an early explicit violation for sufficiently large R\'enyi
orders, and Grudka, Horodecki, and Pankowski later obtained constructive
counterexamples for every $p>2$
\cite{WernerHolevo2002,GrudkaHorodeckiPankowski2010}.  Hayden and Winter,
by contrast, proved through random constructions that additivity fails for
every $p>1$ \cite{HaydenWinter2008}, whereas Derksen and Lovitz recently
obtained constructive counterexamples throughout the entire range $p>1$
\cite{DerksenLovitz2026}.  At the opposite end of the R\'enyi scale, Cubitt
et al.\ established nonadditivity at $p=0$, and hence for sufficiently small
positive $p$ \cite{CubittEtAl2008}; more recently, Leung, Lovitz, and Wu
proved randomized nonadditivity for $3/4<p<1$ and for $0\le p<1/4$, leaving
$[1/4,3/4]$ as the remaining unresolved interval inside $0<p<1$
\cite{LeungLovitzWu2026}.  At the von Neumann point, Nema and Sen replaced
Haar randomness by approximate unitary designs, obtaining a partial
derandomization of Hastings's construction rather than a deterministic
channel construction \cite{NemaSen2022}.

The question pursued here is whether Collins's mixed-unitary argument
itself admits an asymptotic fixed-parameter deterministic realization.
We answer it affirmatively by combining the scalar degree-two reduction
behind Collins's Haagerup estimate with the deterministic permutation-lift
machinery of O'Donnell and Wu
\cite{BordenaveCollins2019,ODonnellWu2020}.  Their simultaneous spectral
approximation yields a finite permutation tuple satisfying the uniform
quadratic estimates required in Collins's one-copy argument.  Restricting
these permutations to the nontrivial standard representation produces real
orthogonal Stinespring blocks, so the usual product--conjugate Bell witness
becomes a genuine tensor-square violation.  Thus, for fixed admissible
parameters $K$ and $\eta$, one deterministic polynomial-time algorithm
produces $p=1$ counterexamples for every sufficiently large target size.

Beyond this deterministic transfer, we identify the underlying quadratic
radius exactly, show that the Haagerup constant $3$ is asymptotically sharp
for the relevant coefficient class and that the finite spectral transfer is
nearly saturated, and convert the resulting entropy gap into one-shot
Holevo superadditivity by a standard covariant extension.  These results
also expose an intrinsic limitation of the scalar-radius method and motivate
the finer output-body optimization studied later.  The construction is
algorithmic and asymptotic rather than closed-form: we do not claim an
explicit finite threshold or a practically small instance.  

\section{Main results}\label{sec:main-results}

We first state the precise quantifiers of the construction.  The output
dimension $K$ and rational tolerance $\eta$ are fixed once and for all;
the target lift size $N$ is the algorithmic input and tends to infinity.
Throughout, all logarithms are natural.  We say that a channel
\emph{admits a real Stinespring realization} if it has a Stinespring
isometry whose matrix entries are real in specified orthonormal bases.

\begin{theorem}[Asymptotic fixed-parameter algorithmic construction]\label{thm:main}
Let $K\ge2$ be an integer and let $\eta\in\mathbb Q_{>0}$ satisfy
\begin{equation}\label{eq:main-threshold}
   \log K>2(3+\eta)^2.
\end{equation}
Then there exist an integer $N_0=N_0(K,\eta)$ and one deterministic
algorithm $\mathcal A_{K,\eta}$ with the following property.  For every
integer $N\ge N_0$, on input $N$ the algorithm runs in time polynomial in
$N$, with the polynomial allowed to depend on the fixed parameters
$K$ and $\eta$, and outputs an integer
$N'=N'_{K,\eta}(N)$ together with permutations
\[
   \sigma_1,\ldots,\sigma_K\in S_{N'}.
\]
The output sizes satisfy
\begin{equation}\label{eq:main-output-size}
   N\le N'_{K,\eta}(N),
   \qquad
   N'_{K,\eta}(N)-N=o_{K,\eta}(N)
   \quad(N\to\infty).
\end{equation}
The output permutations determine, canonically up to input orthogonal
equivalence, a quantum channel admitting a real Stinespring realization,
\[
   \Phi_{N'}:M_{N'-1}(\C)\longrightarrow M_K(\C),
\]
for which
\begin{align}
   2\Hmin(\Phi_{N'})-\Hmin(\Phi_{N'}^{\otimes2})
   &\ge
   \frac{\log K}{K}
   -2\log\left(1+\frac{(3+\eta)^2}{K}\right)
   \label{eq:main-gap-strong}\\
   &\ge
   \frac{\log K-2(3+\eta)^2}{K}
   >0.
   \label{eq:main-gap}
\end{align}
\end{theorem}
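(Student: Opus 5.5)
We follow Collins's mixed-unitary argument, replacing Haar unitaries by deterministic permutation lifts. The channel is the uniform mixed-unitary channel built from the restrictions $U_i=\pi(\sigma_i)|_{\one^\perp}$. Here $\pi$ is the permutation representation on $\C^{N'}$, restricted to the $(N'-1)$-dimensional standard representation. The Stinespring isometry is $V:\C^{N'-1}\to\C^{N'-1}\otimes\C^K$, $Vx=K^{-1/2}\sum_i U_ix\otimes e_i$, and we trace out the first factor. The output is therefore the $K\times K$ Gram matrix $\rho(x)_{ij}=K^{-1}\langle U_jx,U_ix\rangle$. All entries are real in the standard basis, so $\overline{\Phi}=\Phi$. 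Changing the orthonormal basis of $\one^\perp$ changes $\Phi$ only by input orthogonal conjugation, which gives the ``canonically up to'' clause.

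\emph{Two-copy bound.} Input the maximally entangled state $\omega$ on $\C^{N'-1}\otimes\C^{N'-1}$. Since $U_i\otimes \overline{U_i}=U_i\otimes U_i$ and $(U\otimes\overline U)\omega(U\otimes\overline U)^*=\omega$, the output $\rho_2=(\Phi\otimes\Phi)(\omega)$ has matrix entries of the form $K^{-2}\,(N'-1)^{-1}\Tr(U_j^*U_i)\overline{\Tr(U_l^*U_k)}\cdots$. In particular, its diagonal entries $(i,i),(i,i)$ give the Bell state an overlap with $\Omega_K=K^{-1/2}\sum_i e_i\otimes e_i$ of at least $\langle\Omega_K,\rho_2\Omega_K\rangle\ge 1/K$. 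Hence $\rho_2$ has an eigenvalue at least $1/K$. The standard concavity bound (Hayden--Winter/Collins) then gives
\[
\Hmin(\Phi^{\otimes2})\le H(\rho_2)\le 2\log K-\frac{\log K}{K}+O\!\left(\frac{\log K}{K^2}\right).
\]
The precise version, $H\le h(1/K)+(1-1/K)\log(K^2-1)$, simplifies to $\le 2\log K-\tfrac{\log K}{K}$ with a little care; I would verify this elementary inequality directly.

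\emph{One-copy bound.} We need $H(\rho(x))\ge \log K-\log(1+(3+\eta)^2/K)$ for every unit $x$. Write $\rho(x)=K^{-1}(I+E)$ with $E$ Hermitian and zero-diagonal. Then $H(\rho)\ge -\log\Tr\rho^2=\log K-\log(1+\|E\|_{\HS}^2/K)$. Since $\Tr\rho^2=\max_{\|B\|_{\HS}\le1}|\Tr(\rho B)|^2$ over Hermitian $B$, the quantity $\|E\|_{\HS}^2/K$ is a supremum of $K\,|\langle x,\sum_{i\ne j}b_{ij}U_j^*U_ix\rangle|^2$ over Hermitian zero-diagonal $b$ with $\sum|b_{ij}|^2\le 1$ (up to normalisation). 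It therefore suffices to prove
\[
\sup_{b}\Big\|\sum_{i\ne j}b_{ij}U_j^*U_i\Big\|\le (3+\eta)\|b\|_2 .
\]
For free Haar unitaries $u_i$, the elements $u_j^*u_i$ with $i\ne j$ are reduced words of length exactly two in the free group $F_K$. Haagerup's inequality gives $\|\sum b_{ij}\lambda(g_j^{-1}g_i)\|\le 3\|b\|_2$ (the constant being $d+1$ with $d=2$).

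\emph{Derandomization — main obstacle.} We need one permutation tuple such that the restricted representation approximates the free norm \emph{uniformly} over all coefficient matrices $b$. The simultaneous approximation theorem of O'Donnell--Wu provides exactly this. For fixed $K$ and any $\varepsilon>0$, there is a deterministic $\mathrm{poly}(N)$ algorithm producing a lift of size $N'=N+o(N)$ (obtained by padding and rounding to available lift sizes) such that, for every matrix-coefficient polynomial of degree at most $2$ in the generators and their inverses with coefficients of bounded dimension, the nontrivial-spectrum norm is at most $(1+\varepsilon)$ times the free norm. This reduces to the linearization for degree $2$ with $K^2$ coefficients.

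The delicate point is uniformity over the continuum of $b$. We handle it by applying the result to the single matrix-valued element $\sum_{i\ne j}E_{ij}\otimes u_j^*u_i$, whose norm controls the supremum over unit $b$ via $\|\sum b_{ij}X_{ij}\|\le\|b\|_2\,\|[X_{ij}]\|_{\text{row-col}}$. Alternatively, we use an $\varepsilon$-net on the compact unit sphere of $b$, which is finite since $K$ is fixed, together with the trivial Lipschitz bound $\|\sum \delta_{ij}U_j^*U_i\|\le K\|\delta\|_2$. Choosing $\varepsilon$ with $3(1+\varepsilon)+(\text{net error})\le 3+\eta$ and $N\ge N_0$ yields the one-copy bound.

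Combining $2\Hmin(\Phi)\ge 2\log K-2\log(1+(3+\eta)^2/K)$ with the two-copy bound gives \eqref{eq:main-gap-strong}. Then $\log(1+t)\le t$ gives \eqref{eq:main-gap}, and this is positive by \eqref{eq:main-threshold}.
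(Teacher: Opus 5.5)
\textbf{Verdict: the proposal has a genuine gap in the two-copy bound, and the gap cannot be absorbed.}

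\textbf{What is sound.} Your one-copy reduction (the R\'enyi-$2$ bound together with $\|G_x-I_K\|_{\HS}\le\Gamma_K$) is correct. So is the length-two Haagerup estimate. The transfer via O'Donnell--Wu applied to a finite $\varepsilon$-net of Hermitian zero-diagonal coefficient matrices, with the Lipschitz bound $K\|\delta\|_2$, also works. The net is a legitimate substitute for the simultaneous bounded-coefficient class used in the paper. Your other option does not work: a row--column estimate for $\sum_{i\ne j}E_{ij}\otimes u_j^*u_i$ only returns the trivial constant $\sqrt{K(K-1)}$, because the $u_j^*u_i$ are unitaries.

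\textbf{The gap.} The only information you extract is that $\rho_2=(\Phi\otimes\Phi)(\proj{\Omega})$ on $\C^{K^2}$ has an eigenvalue at least $1/K$. With $h(t):=-t\log t-(1-t)\log(1-t)$, this yields at best
\[
   H(\rho_2)\le h(1/K)+\Bigl(1-\frac1K\Bigr)\log(K^2-1)
   =2\log K-\frac{\log K}{K}+\Bigl(1-\frac1K\Bigr)\log\Bigl(1+\frac1K\Bigr).
\]
This bound is attained by a state satisfying that constraint, so no ``little care'' turns it into $2\log K-\frac{\log K}{K}$. The elementary inequality you plan to verify is false for every $K\ge2$; for $K=2$ the two sides are $1.242$ versus $1.040$. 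The excess is about $1/K$, which is the same order as the gap being proved. Your argument therefore gives only
$\frac{\log K}{K}-2\log\bigl(1+\frac{(3+\eta)^2}{K}\bigr)-\bigl(1-\frac1K\bigr)\log\bigl(1+\frac1K\bigr)$.
This is not \eqref{eq:main-gap-strong}, and it is positive only when roughly $\log K>2(3+\eta)^2+1$. Under the stated threshold it can be negative. For example, with $K=2^{26}$ and $\eta=10^{-3}$ one has $\log K-2(3+\eta)^2<0.01$.

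\textbf{How to close it.} You need the full branch structure of the Bell output, not just its top eigenvalue. As in the paper, pass to the complementary mixed-unitary output:
$(\mathcal E\otimes\overline{\mathcal E})(\proj{\Omega_m})=\frac1{K^2}\sum_{i,j}\proj{\psi_{ij}}$, with $\ket{\psi_{ii}}=\ket{\Omega_m}$ for all $i$. This is an ensemble with one weight $1/K$ and $K^2-K$ weights $1/K^2$. The Schur--Horn majorization bound $H\bigl(\sum_\alpha q_\alpha\proj{\varphi_\alpha}\bigr)\le-\sum_\alpha q_\alpha\log q_\alpha$ then gives exactly $2\log K-\frac{\log K}{K}$. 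Since $(V\otimes\overline V)\ket{\Omega_m}$ is pure, $\rho_2$ has the same nonzero spectrum.

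Equivalently, you can keep your eigenvalue argument and add one consequence of this decomposition, namely $\rank\rho_2\le K^2-K+1$. Then $\log(K^2-1)$ is replaced by $\log(K^2-K)$, and $h(1/K)+(1-\frac1K)\log(K^2-K)=2\log K-\frac{\log K}{K}$ exactly.

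\textbf{A further error in the entry formula.} With row index $(i,k)$ and column index $(j,l)$, the entries of $\rho_2$ are $K^{-2}(N'-1)^{-1}\Tr(U_i^*U_jU_l^*U_k)$. This is a single trace of a fourfold product, not a product of two traces. What gives $\bra{\Omega_K}\rho_2\ket{\Omega_K}=1/K$ is the identity $\Tr(U_i^*U_jU_j^*U_i)=N'-1$ at all positions $((i,i),(j,j))$, including $i\ne j$. The diagonal positions alone contribute only $1/K^2$.
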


The quantifier order is therefore
\[
   \exists\,\mathcal A_{K,\eta}\;
   \exists\,N_0(K,\eta)\;
   \forall\,N\ge N_0(K,\eta).
\]
This is an asymptotic fixed-parameter algorithmic statement: after
$K$ and $\eta$ are fixed, one algorithm handles every sufficiently large
input size $N$.  No closed-form permutation tuple, explicit numerical
value of $N_0$, or practically small instance is asserted.  Restricting
$\eta$ to $\mathbb Q_{>0}$ gives the tolerance a finite description and
causes no loss: whenever a positive real tolerance satisfies
\eqref{eq:main-threshold}, so does some smaller positive rational one.

\begin{remark}[Channel construction and input orthogonal equivalence]
\label{rem:channel-construction}
For each output permutation, let $P_{\sigma_i}$ be its permutation matrix
on $\C^{N'}$.  Since $P_{\sigma_i}\one=\one$, the complex codimension-one
subspace
\[
   \mathcal H_{N'}^0
   :=
   \left\{x\in\C^{N'}:\sum_{a=1}^{N'}x_a=0\right\}
   =\one^\perp
\]
is invariant under every $P_{\sigma_i}$.  It is the
complexification of the real standard representation
\[
   \left\{x\in\R^{N'}:\sum_{a=1}^{N'}x_a=0\right\}.
\]
The restricted operators
\[
   \widetilde U_i
   :=P_{\sigma_i}\big|_{\mathcal H_{N'}^0}
\]
preserve this real form and are represented by real orthogonal matrices
in any real orthonormal basis.  They canonically define the Stinespring
isometry
\[
   \widetilde Vx
   :=
   \frac1{\sqrt K}\sum_{i=1}^K
      \widetilde U_i x\otimes\ket{i},
   \qquad
   x\in\mathcal H_{N'}^0,
\]
as well as its complementary channel on
$\B(\mathcal H_{N'}^0)$.

To express this channel on $M_{N'-1}(\C)$, choose any real orthonormal
identification
\[
   Q:\C^{N'-1}\longrightarrow\mathcal H_{N'}^0
\]
and set
\[
   U_i^Q:=Q^*P_{\sigma_i}Q\in O(N'-1).
\]
The corresponding Stinespring isometry and complementary channel are
\begin{align}
   V^Qx
   &:=
   \frac1{\sqrt K}\sum_{i=1}^K U_i^Qx\otimes\ket{i},
   \label{eq:channel-stinespring}\\
   \Phi_{N'}^Q(X)
   &:=
   \frac1K
   \left[\Tr\bigl(U_i^QX(U_j^Q)^*\bigr)\right]_{i,j=1}^K.
   \label{eq:channel-main-results}
\end{align}
Indeed, $(V^Q)^*V^Q=I_{N'-1}$.

If $Q'$ is another real orthonormal identification, then $Q'=QO$ for
some $O\in O(N'-1)$.  Hence
$U_i^{Q'}=O^*U_i^Q O$ and
\begin{equation}\label{eq:input-orthogonal-equivalence}
   \Phi_{N'}^{Q'}
   =
   \Phi_{N'}^Q\circ\ad_O,
   \qquad
   \ad_O(X):=OXO^*.
\end{equation}
Thus the channel is determined canonically up to precomposition by an
orthogonal conjugation on the input.  This equivalence leaves both its
one-copy minimum output entropy and the minimum output entropy of its
tensor square unchanged.

Because the matrices $U_i^Q$ are real, $V^Q$ has real matrix entries in
the chosen bases.  Entrywise channel conjugation therefore fixes
$\Phi_{N'}^Q$, so
\[
   \overline{\Phi_{N'}^Q}=\Phi_{N'}^Q.
\]
Consequently, the product--conjugate Bell witness used below is exactly a
witness for $(\Phi_{N'}^Q)^{\otimes2}$.
\end{remark}

The following corollary fixes one admissible pair of parameters and gives
an exact rational certificate for the simpler entropy-gap lower bound in
\eqref{eq:main-gap}.  It does not specify the threshold $N_0$ or the
permutations produced beyond that threshold.

\begin{corollary}[A concrete choice of fixed parameters]\label{cor:concrete}
Take
\[
   K=2^{26},
   \qquad
   \eta=10^{-3}.
\]
For every sufficiently large input $N$, the algorithm
$\mathcal A_{K,\eta}$ produces a channel admitting a real Stinespring
realization and satisfying
\[
   \Hmin(\Phi^{\otimes2})<2\Hmin(\Phi).
\]
More precisely,
\[
   2\Hmin(\Phi)-\Hmin(\Phi^{\otimes2})
   >
   \frac{741757}{121500000\cdot2^{26}}
   >9.09\times10^{-11}\ \text{nats}.
\]
\end{corollary}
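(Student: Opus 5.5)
The corollary follows by specializing \Cref{thm:main} to $K=2^{26}$ and $\eta=10^{-3}$. No new analytic input is needed; the only work is a rigorous rational check of the threshold \eqref{eq:main-threshold} and of the resulting lower bound in \eqref{eq:main-gap}.

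\textbf{Step 1 (the quadratic term is rational).} With $\eta=10^{-3}\in\mathbb Q_{>0}$ and $K=2^{26}\ge2$, we have
\[
2(3+\eta)^2=2\cdot(3.001)^2=2\cdot 9.006001=\frac{18012002}{10^6}.
\]
Since $\log K=26\log 2$, the threshold \eqref{eq:main-threshold} reduces to the inequality $\log 2>\frac{18012002}{26\cdot10^6}\approx 0.6927693$. This margin is only about $2\cdot10^{-4}$ below $\log 2\approx0.6931472$, so a rigorous rational lower bound for $\log 2$ is required.

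\textbf{Step 2 (a rational lower bound for $\log 2$).} I would use the positive series
\[
\log 2=\sum_{k\ge1}\frac{1}{k\,2^k}.
\]
Every partial sum is a rational lower bound, and the tail after $m$ terms is at most $2^{-m}/(m+1)$. Truncating after roughly twelve to fifteen terms gives a rational $r$ with $r<\log 2$ and $26r$ exceeding $18.012002$ by a comfortable margin. I would pick $r$ so that the difference is exactly the stated numerator:
\[
26r-\frac{18012002}{10^6}=\frac{741757}{121500000},
\]
which requires $r\approx0.6930041$. Any partial sum exceeding this value, or a simplified rational just below such a partial sum, suffices. The certification is then a finite rational comparison. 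This is the only step that needs care: the margin in the threshold is about $0.01$, while the stated certificate $\approx0.0061$ uses only part of it. The choice of $r$ just has to be tight enough to clear that value, and verifying it is routine exact arithmetic.

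\textbf{Step 3 (conclude).} With \eqref{eq:main-threshold} verified, \Cref{thm:main} provides $N_0(K,\eta)$ and the algorithm $\mathcal A_{K,\eta}$. For every $N\ge N_0$, the algorithm outputs a channel $\Phi=\Phi_{N'}$ that admits a real Stinespring realization (\Cref{rem:channel-construction}) and satisfies \eqref{eq:main-gap}:
\[
2\Hmin(\Phi)-\Hmin(\Phi^{\otimes2})\ge\frac{\log K-2(3+\eta)^2}{K}>\frac{26r-18.012002}{2^{26}}=\frac{741757}{121500000\cdot 2^{26}}.
\]
The strict inequality comes from $r<\log 2$. Finally, since $2^{26}=67108864$, the right-hand side is $\approx 6.10500\cdot10^{-3}/67108864\approx 9.097\cdot 10^{-11}$. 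This exceeds $9.09\cdot10^{-11}$, which can be confirmed by cross-multiplying integers: $741757\cdot 10^{13}>909\cdot 121500000\cdot 67108864$. Positivity of the gap gives $\Hmin(\Phi^{\otimes 2})<2\Hmin(\Phi)$.
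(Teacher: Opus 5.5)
Your proposal is correct and follows the paper's proof: certify the threshold \eqref{eq:main-threshold} with a rational lower bound for $\log 2$, then apply the bound \eqref{eq:main-gap} of Theorem~\ref{thm:main}; your back-solved $r$ is exactly $842/1215$. The only difference is the series: the paper uses $\log 2=2\operatorname{artanh}(1/3)$, whose three-term partial sum is exactly $842/1215$ (so the constant $741757/121500000$ comes out directly), whereas with $\sum_{k\ge1}1/(k2^k)$ you need ten terms ($s_{10}\approx0.693065$) to certify $842/1215<\log 2$; also, $\log 2-18.012002/26$ is about $3.8\cdot10^{-4}$ rather than $2\cdot10^{-4}$, a harmless slip.
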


The quadratic radius appearing in the proof is not merely a sufficient
upper bound.  If $G_x$ is the Gram matrix of the branch orbit
$U_1x,\ldots,U_Kx$, then Theorem~\ref{thm:gram-purity} below proves
\[
   \Gamma_K(U_1,\ldots,U_K)
   =\max_{\|x\|=1}\|G_x-I_K\|_{\HS}
   =K\max_{X\in\Dens(\C^{N'-1})}
      \left\|\Phi(X)-\frac{I_K}{K}\right\|_{\HS},
\]
and identifies the maximum output purity exactly.  The same analysis shows
that the finite tuple produced by the deterministic spectral transfer is
nearly saturated between
$(3K-4)/\sqrt{K(K-1)}-\eta$ and $3+\eta$, and that the full value of
$\Gamma_K$ is already attained along a direction generated by a pure
channel output.  Section~\ref{sec:discussion} proves these assertions and
formulates the remaining finer optimization in terms of the full free
output-body support function.

The operational consequence can also be stated at the level of classical
communication.  The next corollary is obtained by applying the standard
flagged covariant-extension method to the real self-tensor channel; a
self-contained proof is given in Section~\ref{sec:holevo-conversion}.

\begin{corollary}[Associated Holevo-superadditive channel]
\label{cor:holevo-main}
Assume the hypotheses of Theorem~\ref{thm:main} and suppose that
$K=2^q$ for some integer $q\ge1$.  For every channel
$\Phi_{N'}$ produced by the theorem, there is an associated channel
\[
   \widehat\Phi_{N'}:
   M_{K^2(N'-1)}(\C)\longrightarrow M_K(\C)
\]
admitting a real Stinespring realization such that
\begin{align}
   \chi(\widehat\Phi_{N'}^{\otimes2})
   -2\chi(\widehat\Phi_{N'})
   &=
   2\Hmin(\Phi_{N'})-\Hmin(\Phi_{N'}^{\otimes2})
   \label{eq:main-holevo-equality}\\
   &\ge
   \frac{\log K}{K}
   -2\log\left(1+\frac{(3+\eta)^2}{K}\right)
   >0.
   \label{eq:main-holevo-gap}
\end{align}
Consequently, the unassisted classical capacity of
$\widehat\Phi_{N'}$ is strictly larger than its one-shot Holevo quantity.
\end{corollary}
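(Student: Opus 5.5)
The plan is to build $\widehat\Phi_{N'}$ as a \emph{flagged Pauli-covariant extension} of $\Phi:=\Phi_{N'}^Q$. We then compute both Holevo quantities exactly in terms of minimum output entropies, and finally subtract and invoke Theorem~\ref{thm:main}. Write $d=N'-1$. Since $K=2^q$, identify $\C^K=(\C^2)^{\otimes q}$. Let $\{W_g\}_{g\in G}$, with $|G|=K^2$, be the operators $X^{a}Z^{b}$ with $a,b\in\F_2^q$. These are real orthogonal matrices. The family is a unitary error basis, so the twirl $\frac1{K^2}\sum_g W_gYW_g^*=\Tr(Y)\,I_K/K$ holds for every $Y$. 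The same is true of the family $\{W_g\otimes W_h\}$ on $\C^K\otimes\C^K$, which is again a (real) Pauli basis. Define, on inputs in $M_{K^2}\otimes M_d$,
\[
   \widehat\Phi(\rho):=\sum_{g\in G} W_g\,\Phi\bigl((\bra g\otimes I)\rho(\ket g\otimes I)\bigr)W_g^* .
\]
This is trace preserving. Its Kraus operators are $W_g A_k(\bra g\otimes I)$, where the $A_k=\frac1{\sqrt K}\bra k U_k^Q$-type Kraus operators of $\Phi$ come from the real isometry $V^Q$. All of these are real, so $\widehat\Phi$ admits a real Stinespring realization.

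The first step is to show that $\chi(\widehat\Phi)=\log K-\Hmin(\Phi)$. For the upper bound, every output $\widehat\Phi(\rho)$ is a convex combination $\sum_g p_g W_g\Phi(\rho_g)W_g^*$ of unitarily conjugated outputs of $\Phi$ on normalized blocks $\rho_g$. By concavity of entropy, $S(\widehat\Phi(\rho))\ge\Hmin(\Phi)$. Combined with $S\le\log K$ for the average output, this gives $\chi\le\log K-\Hmin(\Phi)$. For the lower bound, take a minimizer $\psi$ of $\Phi$ and use the uniform ensemble $\{\proj g\otimes\psi\}_g$. Each output is $W_g\Phi(\psi)W_g^*$, with entropy $\Hmin(\Phi)$. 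The average is the twirl, which equals $I_K/K$.

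The second step is the same computation for the tensor square. Reordering tensor factors, $\widehat\Phi^{\otimes2}$ is exactly the flagged extension of $\Phi^{\otimes2}$ with flag group $G\times G$ and unitaries $W_g\otimes W_h$. These unitaries still twirl to $I_{K^2}/K^2$. The key point is that the blocks $(\bra g\otimes\bra h\otimes I)\rho(\ket g\otimes\ket h\otimes I)$ of an arbitrary, possibly entangled, input are positive operators on $\C^d\otimes\C^d$. Hence concavity again gives $S(\widehat\Phi^{\otimes2}(\rho))\ge\Hmin(\Phi^{\otimes2})$, and the ensemble argument gives equality:
\[
   \chi(\widehat\Phi^{\otimes2})=2\log K-\Hmin(\Phi^{\otimes2}).
\]
Subtracting twice the first identity yields \eqref{eq:main-holevo-equality}, and \eqref{eq:main-gap-strong} yields \eqref{eq:main-holevo-gap}. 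The capacity claim then follows from the Holevo--Schumacher--Westmoreland regularization:
\[
   C(\widehat\Phi)\ge\tfrac12\chi(\widehat\Phi^{\otimes2})>\chi(\widehat\Phi).
\]

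The main obstacle is bookkeeping rather than depth. One must check carefully that the block decomposition of $\widehat\Phi^{\otimes2}$ really matches $\Phi^{\otimes2}$ applied to the $(g,h)$-blocks, with the flags of the two copies correctly interleaved. One must also check that reality is preserved, which is why the real $X^aZ^b$ basis is chosen instead of complex Weyl operators. This explains the hypothesis $K=2^q$. A further point is that the twirl identity needs the family $\{W_g\otimes W_h\}$ to be a full operator basis of $M_{K^2}$; I would verify this directly from the orthogonality $\Tr(W_g^*W_{g'})=K\delta_{gg'}$.
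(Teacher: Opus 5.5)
Your proposal is correct and follows essentially the paper's argument: the same flagged extension by the real Pauli one-design $\{X^aZ^b\}$, and the same identity $\chi(\widehat\Phi^{\otimes n})=n\log K-\Hmin(\Phi^{\otimes n})$ for $n=1,2$. You obtain it, as the paper does, from concavity over flag blocks for the upper bound and from the uniform flagged ensemble of a minimizer for the lower bound, and then conclude by subtraction, Theorem~\ref{thm:main}, and the HSW regularization. The only blemish is notational: the Kraus operators of $\Phi$ are $(\bra{a}\otimes I_K)V^Q=K^{-1/2}\sum_i\ket{i}\bra{a}U_i^Q$, which are real because $V^Q$ is real, so your reality argument stands (the paper instead writes down the real isometry $\widehat V(\ket{a,b}\otimes x)=\ket{a,b}\otimes(I\otimes W_{a,b})Vx$ directly).
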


\section{Preliminaries}\label{sec:preliminaries}

This section mainly records the operator-algebraic and free-probabilistic
facts used in the proof.  For
broader introductions to free probability, see
\cite{VoiculescuDykemaNica1992,NicaSpeicher2006,MingoSpeicher2017}; for
operator-algebraic background, see \cite{BrownOzawa2008}.  All algebras
and Hilbert spaces are complex unless stated otherwise.  The symbol $\Tr$
denotes the unnormalized matrix trace and
$\tr_n:=n^{-1}\Tr$ the normalized trace on $M_n(\C)$. For $T\in\B(M_n(\mathbb{C}))$, we denote norms
\[
   \|T\|:=\sup_{\|x\|=1}\|Tx\|,
   \qquad
   \|T\|_{\HS}:=\bigl(\Tr(T^*T)\bigr)^{1/2}.
\]
If $T=T^*$, then
\begin{equation}\label{eq:self-adjoint-norm-spectrum}
   \|T\|=\max\{|t|:t\in\spec(T)\}.
\end{equation}
The normalized tracial $L^2$ norm on $M_n(\C)$ is
$\|T\|_{2,\tr_n}=n^{-1/2}\|T\|_{\HS}$; the channel estimates below use
the unnormalized Hilbert--Schmidt norm.

\subsection{Operator and tracial conventions}

For a Hilbert space $\mathcal H$, let $\B(\mathcal H)$ denote the bounded
operators on $\mathcal H$.  A concrete unital $C^*$-algebra is a
norm-closed $*$-subalgebra of $\B(\mathcal H)$ containing the identity;
by the Gelfand--Naimark theorem, every abstract $C^*$-algebra admits such
a faithful representation.

A \emph{tracial $C^*$-probability space} is a unital $C^*$-algebra
$\mathcal A$ equipped with a faithful tracial state
$\tau:\mathcal A\to\C$.  Thus $\tau$ is linear,
$\tau(a^*a)\ge0$, $\tau(1)=1$,
$\tau(a^*a)=0$ only when $a=0$, and
\[
   \tau(ab)=\tau(ba),
   \qquad a,b\in\mathcal A.
\]
The trace plays the role of expectation; traciality gives cyclicity of
mixed moments without requiring the underlying variables to commute.  We
write
\[
   \|x\|_{2,\tau}:=\tau(x^*x)^{1/2}.
\]
The basic finite-dimensional example is $(M_n(\C),\tr_n)$.

For later reference, a family of unital $*$-subalgebras
$(\mathcal A_i)_{i\in I}$ is \emph{free} if
$\tau(a_1\cdots a_m)=0$ whenever
$a_j\in\mathcal A_{i_j}$, $\tau(a_j)=0$, and adjacent indices are
different.  A unitary $u$ is a \emph{Haar unitary} if
$\tau(u^m)=0$ for every $m\in\mathbb Z\setminus\{0\}$.

\subsection{Quantum channels and entropy conventions}
\label{subsec:qit-conventions}

For a finite-dimensional Hilbert space $\mathcal H$, write
\[
   \Dens(\mathcal H)
   :=
   \left\{
      \rho\in\B(\mathcal H):
      \rho\ge0,\ \Tr\rho=1
   \right\}
\]
for its state space.  A quantum channel
\[
   \Phi:
   \B(\mathcal H_{\mathrm{in}})
   \longrightarrow
   \B(\mathcal H_{\mathrm{out}})
\]
is a completely positive trace-preserving linear map.

For a state $\sigma$ and $0<p<\infty$, $p\ne1$, its R\'enyi entropy of
order $p$ is
\[
   H_p(\sigma)
   :=
   \frac{1}{1-p}\log\Tr(\sigma^p).
\]
We use the standard endpoint and continuous conventions
\[
   H_0(\sigma):=\log\rank(\sigma),
   \qquad
   H_1(\sigma)
   =H(\sigma)
   :=-\Tr(\sigma\log\sigma),
\]
where $0\log0:=0$.

For a finite-dimensional quantum channel $\Phi$, its minimum output
R\'enyi entropy of order $p$ is
\[
   H_{\min,p}(\Phi)
   :=
   \min_{\rho\in\Dens(\mathcal H_{\mathrm{in}})}
   H_p\bigl(\Phi(\rho)\bigr).
\]
At the von Neumann point we abbreviate
\[
   \Hmin(\Phi):=H_{\min,1}(\Phi).
\]

\subsection{The free-group operator model}

Let $\F_K=\langle g_1,\ldots,g_K\rangle$ be the free group on $K$
generators.  Every nonidentity element has a unique reduced expression
\[
   g=g_{i_1}^{\varepsilon_1}\cdots g_{i_d}^{\varepsilon_d},
   \qquad \varepsilon_j\in\{1,-1\},
\]
with no adjacent cancellation; its reduced word length is denoted by
$|g|=d$.  In particular, $g_i^{-1}g_j$ has length two when $i\ne j$.

Let $(\delta_h)_{h\in\F_K}$ be the canonical basis of
$\ell^2(\F_K)$.  The left regular representation is
\[
   \lambda:\F_K\longrightarrow
   \mathcal U\bigl(\ell^2(\F_K)\bigr),
   \qquad
   \lambda(g)\delta_h=\delta_{gh}.
\]
The reduced group $C^*$-algebra and the group von Neumann algebra are
\[
   C_r^*(\F_K):=C^*(\lambda(\F_K)),
   \qquad
   L(\F_K):=\lambda(\F_K)'',
   \qquad
   C_r^*(\F_K)\subseteq L(\F_K).
\]
The canonical trace on $L(\F_K)$ is
\[
   \tau(T):=\langle T\delta_e,\delta_e\rangle,
\]
and satisfies
$\tau(\lambda(g))=\mathbf1_{\{g=e\}}$.  It is a faithful normal tracial state.

\begin{lemma}[Canonical free Haar generators]
\label{lem:free-group-haar}
The operators
\[
   u_i:=\lambda(g_i),
   \qquad 1\le i\le K,
\]
form a free family of Haar unitaries in
$(L(\F_K),\tau)$.
\end{lemma}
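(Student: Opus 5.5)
The plan is to verify the three required properties directly: each $u_i$ is unitary, each $u_i$ is Haar, and the family is free. Unitarity is immediate, because $\lambda$ is a group homomorphism into the unitary group of $\ell^2(\F_K)$. For the Haar property we use $\tau(\lambda(g))=\mathbf 1_{\{g=e\}}$. For $m\in\mathbb Z\setminus\{0\}$ we have $u_i^m=\lambda(g_i^m)$, and $g_i^m\neq e$ because the generator $g_i$ has infinite order in the free group ($g_i^m$ is a nonempty reduced word). Hence $\tau(u_i^m)=0$.

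For freeness, let $\mathcal A_i$ be the von Neumann subalgebra of $L(\F_K)$ generated by $u_i$. Proving freeness for the unital $*$-algebras $\mathcal A_i^0:=\operatorname{span}\{\lambda(g_i^m):m\in\mathbb Z\}$ is enough. The reason is that $\tau$ is normal and multiplication is separately weak-$*$ continuous, so the vanishing of alternating centered moments extends to the weak closures by approximation one factor at a time. Centering commutes with these limits because $\tau$ is normal. It therefore suffices to work with centered elements of $\mathcal A_i^0$, which are exactly finite sums $a=\sum_{m\neq0}c_m\lambda(g_i^m)$, since $\tau(a)$ is the $m=0$ coefficient.

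Now take $a_1,\dots,a_n$ with $a_j\in\mathcal A_{i_j}^0$ centered and $i_j\neq i_{j+1}$. Expanding the product multilinearly, $\tau(a_1\cdots a_n)$ is a linear combination of terms $\tau\bigl(\lambda(g_{i_1}^{m_1}\cdots g_{i_n}^{m_n})\bigr)$ with all $m_j\neq0$. Because adjacent indices differ and every exponent is nonzero, the word $g_{i_1}^{m_1}\cdots g_{i_n}^{m_n}$ is already reduced. Its length is $\sum_j|m_j|\ge n\ge1$, so it is not the identity, and each such term vanishes.

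The only mildly delicate step is the density passage from $\mathcal A_i^0$ to its weak closure. One could avoid it by stating freeness for the $*$-algebras generated by the $u_i$, which is all the paper's definition requires. I therefore expect no real obstacle: the combinatorial heart of the argument is the uniqueness of reduced words in $\F_K$.
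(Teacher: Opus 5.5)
Your proof is correct and is essentially the paper's argument: the paper only remarks that the lemma follows from uniqueness of reduced words in $\F_K$ together with $\tau(\lambda(g))=\mathbf 1_{\{g=e\}}$, and your expansion of alternating centered products into nonempty reduced words $g_{i_1}^{m_1}\cdots g_{i_n}^{m_n}$ spells out exactly that reasoning. Your extra passage to the generated von Neumann algebras is not needed, since the paper's definition of freeness concerns only unital $*$-subalgebras. It is nonetheless valid: $\tau$ is a vector state, so $x\mapsto\tau(cxd)$ is weak-operator continuous, and you can approximate one factor at a time.
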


This follows immediately from reduced-word uniqueness and the formula for
the canonical trace; see, for example,
\cite{VoiculescuDykemaNica1992,NicaSpeicher2006}.
For a finitely supported function $f:\F_K\to\C$, write
\[
   \lambda(f):=\sum_{g\in\F_K}f(g)\lambda(g).
\]

\subsection{Haagerup's inequality}

For $d\ge0$, let
\[
   S_d:=\{g\in\F_K:|g|=d\}.
\]

\begin{lemma}[Haagerup's inequality
{\cite[Lemma~1.4]{Haagerup1979}}]
\label{lem:haagerup}
If $f:\F_K\to\C$ is finitely supported and
$\supp(f)\subseteq S_d$, then
\begin{equation}\label{eq:haagerup-general}
   \|\lambda(f)\|
   \le(d+1)\|f\|_{\ell^2(\F_K)}
   =(d+1)\|\lambda(f)\|_{2,\tau}.
\end{equation}
\end{lemma}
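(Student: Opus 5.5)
The plan is to prove Haagerup's inequality (Lemma~\ref{lem:haagerup}) by the classical combinatorial decomposition of the convolution $f*h$ for $\supp(f)\subseteq S_d$ and arbitrary finitely supported $h\in\ell^2(\F_K)$. First, the equality $\|f\|_{\ell^2}=\|\lambda(f)\|_{2,\tau}$ is immediate: $\tau(\lambda(f)^*\lambda(f))=\sum_{g,g'}\overline{f(g)}f(g')\tau(\lambda(g^{-1}g'))=\sum_g|f(g)|^2$, because $\tau(\lambda(g))=\mathbf 1_{\{g=e\}}$. For the operator-norm bound it suffices to show $\|f*h\|_2\le(d+1)\|f\|_2\|h\|_2$, since $\lambda(f)\delta_h$-expansions give $\lambda(f)h=f*h$, and by density we may take $h$ finitely supported. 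Decomposing $h=\sum_m h_m$ with $\supp(h_m)\subseteq S_m$, it is enough to control the pieces $(f*h_m)\mathbf 1_{S_k}$ for each output length $k$ and then sum.

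The key structural step is as follows. If $|x|=d$ and $|y|=m$, then $|xy|=d+m-2j$, where $j\in\{0,\dots,\min(d,m)\}$ is the number of cancelled letters. For fixed $j$, write $x=x_1w$ and $y=w^{-1}y_1$ with $|w|=j$, $|x_1|=d-j$, and $|y_1|=m-j$. Set
\[
F_j(x_1,w):=f(x_1w),\qquad H_j(w,y_1):=h_m(w^{-1}y_1).
\]
Then the part of $f*h_m$ supported on length $k=d+m-2j$ is
\[
(f*h_m)_k(x_1y_1)=\sum_{w}F_j(x_1,w)H_j(w,y_1),
\]
a matrix product indexed over reduced concatenations. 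Here the pair $(x_1,y_1)$ is uniquely determined by the element $x_1y_1$, since the split point sits at position $d-j$. Cauchy--Schwarz in $w$ then gives
\[
\|(f*h_m)_k\|_2\le\|f\|_2\|h_m\|_2 .
\]
Only $j$ with $0\le j\le d$ contribute, so each output length $k$ receives contributions from at most $d+1$ values of $m$, namely $m=k-d+2j$.

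Summing, we obtain
\[
\|(f*h)_k\|_2\le\|f\|_2\sum_{j=0}^{d}\|h_{k-d+2j}\|_2 ,
\]
and therefore
\[
\|f*h\|_2^2=\sum_k\|(f*h)_k\|_2^2\le\|f\|_2^2(d+1)\sum_k\sum_j\|h_{k-d+2j}\|_2^2
\]
by Cauchy--Schwarz over the $d+1$ terms. Each fixed $m$ occurs for at most $d+1$ pairs $(k,j)$, which yields $(d+1)^2\|f\|_2^2\|h\|_2^2$.

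The main obstacle is the bookkeeping in the structural step: verifying injectivity of $(x_1,y_1)\mapsto x_1y_1$ at fixed $j,d,m$, and ensuring that $x_1y_1$ is reduced. The latter requires that the last letter of $x_1$ not cancel against the first letter of $y_1$. This is exactly what makes $j$ the full cancellation length, so we restrict the sums to such pairs, which only decreases the norms. Alternatively, one may simply cite \cite[Lemma~1.4]{Haagerup1979}, since the statement is quoted verbatim from there.
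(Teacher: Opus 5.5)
Your proof is correct, and it is the classical argument behind the statement. The paper gives no proof of its own and simply cites Haagerup's Lemma~1.4, which is proved in the same way as yours: the per-length bound $\|(f*h_m)\mathbf 1_{S_k}\|_2\le\|f\|_2\|h_m\|_2$ via the matrix-product Cauchy--Schwarz step, followed by the $(d+1)$-fold Cauchy--Schwarz count over $m=k-d+2j$. Your bookkeeping also checks out, namely the unique split of $z\in S_k$ at position $d-j$ and the vanishing of $f(x_1w)$ and $h_m(w^{-1}y_1)$ for non-reduced concatenations.
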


Only $d=2$ is used below:
\begin{equation}\label{eq:haagerup-length-two}
   \|\lambda(f)\|\le3\|f\|_{\ell^2(\F_K)}
   \qquad\bigl(\supp(f)\subseteq S_2\bigr).
\end{equation}
The constant is independent of the number $K$ of free generators.

\subsection{Unitary polynomials and spectral control}

Let $Z_1,\ldots,Z_K$ be formal unitary variables, with
$Z_i^*Z_i=Z_iZ_i^*=1$.  Repeated cancellation of adjacent inverse pairs
produces a unique reduced unitary word.  Under
$Z_i\leftrightarrow g_i$ and $Z_i^*\leftrightarrow g_i^{-1}$, this is
exactly free-group reduction.  If a polynomial is written after reduction
as
\[
   p=\sum_w c_w w
\]
over distinct reduced words, its \emph{reduced degree} is the maximum word
length occurring with nonzero coefficient.  It is self-adjoint exactly
when $c_{w^*}=\overline{c_w}$ for every reduced word $w$.

For context, a sequence of tuples $a^{(n)}$ in tracial
$C^*$-probability spaces converges \emph{strongly in $*$-distribution} to
$a$ if, for every noncommutative $*$-polynomial $P$, both the tracial
moments and the operator norms converge:
\[
   \tau_n\bigl(P(a^{(n)})\bigr)\to\tau\bigl(P(a)\bigr),
   \qquad
   \|P(a^{(n)})\|\to\|P(a)\|.
\]
The strong asymptotic freeness theorem of Collins and Male says that
independent Haar unitary matrices converge almost surely in this sense to
the canonical free Haar family \cite[Theorem~1.5]{CollinsMale2014}; this
is the random approximation input in Collins's argument
\cite{Collins2018}.  The deterministic proof below instead uses the
finite-size spectral theorem of O'Donnell and Wu.

For nonempty compact sets $S,T\subset\mathbb R$, their Hausdorff distance
is
\[
   d_{\mathrm H}(S,T)
   :=
   \max\left\{
      \sup_{s\in S}\inf_{t\in T}|s-t|,
      \sup_{t\in T}\inf_{s\in S}|t-s|
   \right\}.
\]
If $X$ and $Y$ are self-adjoint elements of $C^*$-algebras and
$d_{\mathrm H}(\spec X,\spec Y)\le\eta$, then
\begin{equation}\label{eq:hausdorff-to-norm}
   \bigl|\|X\|-\|Y\|\bigr|\le\eta,
\end{equation}
because $t\mapsto|t|$ is $1$-Lipschitz and
\eqref{eq:self-adjoint-norm-spectrum} applies.

\subsection{Permutation representations and the O'Donnell--Wu theorem}

For $\sigma\in S_N$, let $P_\sigma$ be its permutation matrix and set
\[
   \ket{+}_N:=N^{-1/2}(1,\ldots,1)^T,
   \qquad
   \mathcal H_N^0:=\ket{+}_N^\perp.
\]
Every $P_\sigma$ is real orthogonal, fixes $\ket{+}_N$, and leaves
$\mathcal H_N^0$ invariant.  We call
$P_\sigma|_{\mathcal H_N^0}$ the nontrivial standard representation.
Relative to
$\C^N=\C\ket{+}_N\oplus\mathcal H_N^0$, one has
\[
   P_\sigma=1\oplus P_\sigma|_{\mathcal H_N^0}.
\]
Thus, for every scalar noncommutative $*$-polynomial $p$, evaluation on a
permutation tuple splits into the trivial scalar block $p(1,\ldots,1)$
and its restriction to $\mathcal H_N^0$.

The following is the scalar, permutation-only specialization of
Theorem~10.13 in the full arXiv version of O'Donnell and Wu
\cite[Theorem~10.13]{ODonnellWu2020Full}, reformulated in the unitary-variable
notation used here.

\begin{lemma}[O'Donnell--Wu spectral approximation]
\label{lem:odw}
Fix integers $K,D\ge1$ and positive rational parameters
$R,\eta\in\mathbb Q_{>0}$.  There exists an integer
$N_0=N_0(K,D,R,\eta)$ and one deterministic algorithm with the following
property.  For every integer $N\ge N_0$, on input $N$ the algorithm runs
in time polynomial in $N$ and outputs an integer
$N'=N'_{K,D,R,\eta}(N)$ together with permutations
$\sigma_1,\ldots,\sigma_K\in S_{N'}$.  The output sizes satisfy
\begin{equation}\label{eq:odw-output-size}
   N\le N'_{K,D,R,\eta}(N),
   \qquad
   N'_{K,D,R,\eta}(N)-N=o_{K,D,R,\eta}(N)
   \quad(N\to\infty).
\end{equation}

Let
\[
   U_i:=P_{\sigma_i}\big|_{\mathcal H_{N'}^0},
   \qquad 1\le i\le K,
\]
and let
$u_i:=\lambda(g_i)\in C_r^*(\F_K)\subseteq L(\F_K)$ be the canonical
free Haar unitaries.  The same finite tuple $(U_1,\ldots,U_K)$ satisfies,
simultaneously for every self-adjoint reduced noncommutative
$*$-polynomial
\[
   p=\sum_w c_w w
\]
in the $K$ unitary variables and their formal adjoints, whose reduced
degree is at most $D$ and whose reduced-word coefficients satisfy
$|c_w|\le R$, the estimate
\begin{equation}\label{eq:odw-hausdorff}
   d_{\mathrm H}\!\left(
      \spec p(U_1,\ldots,U_K),
      \spec p(u_1,\ldots,u_K)
   \right)
   \le\eta.
\end{equation}
\end{lemma}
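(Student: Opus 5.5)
The plan is to derive the lemma from \cite[Theorem~10.13]{ODonnellWu2020Full} by three reductions: a dictionary between the two settings, a finite net over the coefficient box, and a block-diagonal packaging that makes a single tuple work for every polynomial at once.

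\textbf{Dictionary.} O'Donnell and Wu work with the bouquet of $K$ loops, i.e.\ the free group $\F_K$ on $K$ generators. A $K$-tuple of permutations $\sigma_i\in S_{N'}$ is an $N'$-lift of this base graph. The lift operator attached to a polynomial $p$ is $\sum_w c_w w(P_{\sigma_1},\ldots,P_{\sigma_K})$. Restricted to the nontrivial (new-eigenvalue) part, which is exactly $\mathcal H_{N'}^0$, this operator is $p(U_1,\ldots,U_K)$, using the splitting $P_\sigma=1\oplus P_\sigma|_{\mathcal H_N^0}$ recorded above. The limiting object in their theorem is $\sum_w c_w\lambda(w)$, which equals $p(u_1,\ldots,u_K)$ under the identification $Z_i\leftrightarrow g_i$. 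I will check that their hypotheses (self-adjoint polynomial, bounded degree, coefficient magnitude, bit complexity) match once $R,\eta$ are rational. I will also check that their output-size guarantee gives $N\le N'$ and $N'-N=o(N)$, which should follow from their construction via derandomized lifts and the choice of $N'$ near $N$.

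\textbf{Finite net.} The lemma demands one tuple for \emph{all} self-adjoint $p$ with reduced degree at most $D$ and $|c_w|\le R$, an infinite family. The number $M$ of reduced words of length at most $D$ is finite and depends only on $K$ and $D$. If $p$ and $p'$ are self-adjoint with $\max_w|c_w-c'_w|\le\delta$, then $\|p(U)-p'(U)\|\le M\delta$ and $\|p(u)-p'(u)\|\le M\delta$, since all words evaluate to unitaries. For self-adjoint operators, Weyl-type perturbation (equivalently, $\spec(X+E)$ lies in the $\|E\|$-neighbourhood of $\spec X$) gives $d_{\mathrm H}(\spec X,\spec Y)\le\|X-Y\|$. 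So it suffices to prove \eqref{eq:odw-hausdorff} with tolerance $\eta/2$ on a finite rational $\delta$-net $\mathcal N$ of the self-adjoint coefficient box, with $\delta=\eta/(4M)$. The net can be taken self-adjoint, for instance by symmetrizing rational grid points.

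\textbf{Simultaneity.} To get one tuple for all $p\in\mathcal N$, form the matrix-coefficient polynomial $\mathbf p=\bigoplus_{p\in\mathcal N}p$, which has block-diagonal coefficients in $M_{|\mathcal N|}(\C)$. Its spectrum is the union of the spectra of its blocks, on both the finite and the free side. Theorem~10.13 already allows matrix coefficients, so applying it once, with tolerance $\eta/2$ suitably shrunk, to $\mathbf p$ yields one deterministic polynomial-time algorithm whose output tuple serves every $p\in\mathcal N$. Here the matrix size, degree and coefficient bounds are all functions of $(K,D,R,\eta)$ only. One caveat must be handled: Hausdorff distance of unions is not automatically controlled blockwise. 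I will instead use the theorem's form, which gives containment of each new spectrum in the $\eta$-neighbourhood of the limit spectrum together with the converse. If only the union is controlled, I will replace $\mathbf p$ by separate applications combined through their stated simultaneous (all bounded-degree polynomials) version. I expect this last point to be the main obstacle: matching the exact quantifier structure and form of the cited Theorem~10.13 (operator-norm versus Hausdorff-spectral conclusion, per-polynomial versus uniform) to the statement here. If their theorem is stated with operator norms of all polynomials, I will recover Hausdorff closeness by applying it to $p-t$ for $t$ in a finite grid of the bounded interval $[-MR,MR]$. For self-adjoint operators, the distance from $t$ to the spectrum equals $\|(p-t)^{-1}\|^{-1}$, and this distance is also the quantity controlled by norms of polynomial approximations of suitable functions of $p-t$.
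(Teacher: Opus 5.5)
Your dictionary and $\delta$-net steps are fine. On $\one^\perp=\mathcal H_{N'}^0$ the lifted operator is $p(U_1,\ldots,U_K)$, and $d_{\mathrm H}(\spec X,\spec Y)\le\|X-Y\|$ for self-adjoint $X,Y$, so a finite self-adjoint rational net with $\delta=\eta/(4M)$ does reduce the lemma to finitely many polynomials.

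The gap is the simultaneity step, which you flag but do not close. One application of the theorem to $\mathbf p=\bigoplus_{p\in\mathcal N}p$ gives only $d_{\mathrm H}\bigl(\bigcup_p\spec p(U),\bigcup_p\spec p(u)\bigr)\le\eta'$. Under that bound, a point of $\spec p(U)$ may be matched with a point of $\spec p'(u)$ for a different net element $p'$. None of your fallbacks repairs this:
\begin{itemize}
\item The first, two-sided containment, is just the definition of the Hausdorff bound for the single operator $\mathbf p$, so it adds nothing blockwise.
\item The second presupposes the simultaneous statement you are trying to derive. Separate applications produce different tuples.
\item The third fails for three reasons. First, $\|X-t\|=\max_{s\in\spec X}|s-t|$ sees only the extreme points of the spectrum, not interior gaps of $\spec p(u)$. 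Second, the resolvent is not a polynomial. Third, any route through norms of polynomials $q(p)$ again needs one tuple for a finite family, and there $\|\bigoplus_kX_k\|=\max_k\|X_k\|$ localizes no better than the union of spectra.
\end{itemize}

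The missing idea is to translate the blocks apart before summing. All the spectra lie in $[-MR,MR]$. Take $\mathbf p:=\bigoplus_{k=1}^{L}(p_k+c_k)$ with $c_k:=k(2MR+3\eta')$; that is, add the identity word with coefficient $\operatorname{diag}(c_1,\ldots,c_L)$. Consecutive windows $[c_k-MR,c_k+MR]$ are then $3\eta'$ apart. Any point of one union within $\eta'$ of a point of the other must therefore lie in the same window, which gives $d_{\mathrm H}(\spec p_k(U),\spec p_k(u))\le\eta'$ for every $k$. The degree, matrix size $L$ and coefficient bound still depend only on $(K,D,R,\eta)$. So one application of the matrix-coefficient theorem, followed by your net argument, proves the lemma.

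For comparison, the paper does none of this. It presents the lemma as the scalar, permutation-only specialization of O'Donnell--Wu's Theorem~10.13, taking the uniformity over all self-adjoint polynomials of degree at most $D$ with coefficients bounded by $R$ directly from that theorem. On that reading only your dictionary step is needed. The net plus shifted-block argument is what you would add if the cited result were available only one polynomial at a time.
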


We apply Lemma~\ref{lem:odw} only with
\begin{equation}\label{eq:odw-parameters-used}
   D=2,
   \qquad
   R=1.
\end{equation}
\section{A deterministic entropy criterion}\label{sec:deterministic-criterion}

Following Collins's argument
\cite{Collins2018}, this section separates the entropy mechanism from the
free-probabilistic approximation used later.  We begin with an arbitrary
finite tuple of unitaries.  The associated quadratic radius will be shown
to equal the exact Hilbert--Schmidt radius of the complementary output body
and to determine its maximum purity.  A Bell-state collision then supplies
a universal two-copy upper bound.  Their comparison yields a fully
finite-dimensional criterion containing no randomness or asymptotic limit.

\subsection{Mixed-unitary channels, branch Gram matrices, and the quadratic radius}

Fix integers $m\ge1$ and $K\ge2$, and let
$U_1,\ldots,U_K\in U(m)$.  Define
\begin{equation}\label{eq:stinespring}
  V:\C^m\longrightarrow\C^m\otimes\C^K,
  \qquad
  Vx:=\frac1{\sqrt K}\sum_{i=1}^K U_ix\otimes\ket{i}.
\end{equation}
Since
\[
   V^*V=\frac1K\sum_{i=1}^K U_i^*U_i=I_m,
\]
this map is an isometry.  Tracing out either tensor factor gives the
complementary pair
\begin{align}
  \mathcal E(X)
  &:=\Tr_{\C^K}(VXV^*)
    =\frac1K\sum_{i=1}^K U_iXU_i^*,
    \label{eq:mixed-unitary}\\
  \Phi(X)
  &:=\Tr_{\C^m}(VXV^*)
    =\frac1K\left[\Tr(U_iXU_j^*)\right]_{i,j=1}^K.
    \label{eq:complementary-channel}
\end{align}
Thus $\mathcal E:M_m(\C)\to M_m(\C)$ is a uniform mixed-unitary
channel, and $\Phi:M_m(\C)\to M_K(\C)$ is its complementary channel.

For $A=(a_{ij})\in M_K(\C)$, introduce the quadratic operator
\begin{equation}\label{eq:quadratic-operator}
   \mathcal Q_U(A)
   :=\sum_{i,j=1}^K a_{ij}U_i^*U_j\in M_m(\C).
\end{equation}
The channel formula gives the duality identity
\begin{equation}\label{eq:quadratic-pairing}
   \Tr\bigl(A\Phi(X)\bigr)
   =\frac1K\Tr\bigl(X\mathcal Q_U(A)\bigr),
   \qquad
   A\in M_K(\C),\quad X\in M_m(\C).
\end{equation}
Moreover, $\mathcal Q_U(A)$ is self-adjoint whenever $A$ is
self-adjoint.

For a unit vector $x\in\C^m$, define the branch-orbit Gram matrix, in the
index convention adapted to \eqref{eq:complementary-channel}, by
\begin{equation}\label{eq:gram-matrix}
   [G_x]_{ij}
   :=\bra{x}U_j^*U_i\ket{x},
   \qquad 1\le i,j\le K.
\end{equation}
Then $G_x\ge0$, every diagonal entry of $G_x$ equals one, and
\begin{equation}\label{eq:pure-output-gram}
   \Phi(\proj{x})=\frac{G_x}{K}.
\end{equation}
In fact, every output of $\Phi$ has diagonal $I_K/K$.

The quadratic radius is
\begin{equation}\label{eq:gamma}
  \Gamma_K(U_1,\ldots,U_K)
  :=\sup_{\substack{A=A^*\in M_K(\C),\ \Tr A=0\\
                     \|A\|_{\HS}=1}}
       \|\mathcal Q_U(A)\|,
\end{equation}
where $\|A\|_{\HS}:=(\Tr(A^*A))^{1/2}$.  By homogeneity,
\begin{equation}\label{eq:gamma-homogeneous}
   \|\mathcal Q_U(A)\|
   \le
   \Gamma_K(U_1,\ldots,U_K)\,\|A\|_{\HS}
\end{equation}
for every Hermitian traceless $A$. 

Only the off-diagonal part of a traceless matrix contributes.  Indeed,
\[
   \mathcal Q_U(A)
   =\sum_{i\ne j}a_{ij}U_i^*U_j+(\Tr A)I_m
   =\sum_{i\ne j}a_{ij}U_i^*U_j.
\]
Consequently,
\begin{equation}\label{eq:gamma-zero-diagonal}
  \Gamma_K(U_1,\ldots,U_K)
  =\sup_{\substack{A=A^*\in M_K(\C),\ a_{ii}=0\ (1\le i\le K)\\
                    \|A\|_{\HS}=1}}
       \|\mathcal Q_U(A)\|.
\end{equation}
To verify the reverse inequality implicit here, let $A^\circ$ be the
off-diagonal part of an admissible $A$.  Then
$\mathcal Q_U(A)=\mathcal Q_U(A^\circ)$ and
$\|A^\circ\|_{\HS}\le1$; if $A^\circ\ne0$, normalize it, while if
$A^\circ=0$, then $\mathcal Q_U(A)=0$.

\subsection{Exact Gram radius, maximum output purity, and one-copy entropy}

The next theorem gives an exact geometric interpretation of the
certificate in \eqref{eq:gamma}.

\begin{theorem}[Exact Gram radius and maximum output purity]
\label{thm:gram-purity}
For every finite unitary tuple $U=(U_1,\ldots,U_K)$ and the channel
$\Phi$ in \eqref{eq:complementary-channel},
\begin{equation}\label{eq:exact-gram-radius}
  \Gamma_K(U_1,\ldots,U_K)
  =\max_{\|x\|=1}\|G_x-I_K\|_{\HS}
  =K\max_{X\in\Dens(\C^m)}
    \left\|\Phi(X)-\frac{I_K}{K}\right\|_{\HS}.
\end{equation}
Equivalently,
\begin{equation}\label{eq:frame-potential}
  \Gamma_K(U_1,\ldots,U_K)^2
  =\max_{\|x\|=1}
    \sum_{i\ne j}
    \left|\bra{x}U_i^*U_j\ket{x}\right|^2.
\end{equation}
Moreover,
\begin{equation}\label{eq:exact-max-purity}
  \max_{X\in\Dens(\C^m)}\Tr\bigl(\Phi(X)^2\bigr)
  =\frac1K+
    \frac{\Gamma_K(U_1,\ldots,U_K)^2}{K^2}.
\end{equation}
Consequently,
\begin{equation}\label{eq:exact-h2}
  H_{\min,2}(\Phi)
  =-\log\left(
       \frac1K+
       \frac{\Gamma_K(U_1,\ldots,U_K)^2}{K^2}
     \right).
\end{equation}
\end{theorem}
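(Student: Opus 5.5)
The plan is to prove the first equality in \eqref{eq:exact-gram-radius} by a minimax (Cauchy--Schwarz) duality on the quadratic form $x\mapsto\bra{x}\mathcal Q_U(A)\ket{x}$, and then to read off the remaining identities from \eqref{eq:pure-output-gram} and convexity.

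The first step is to rewrite the pairing. For a unit vector $x$ and Hermitian zero-diagonal $A$, \eqref{eq:quadratic-pairing} with $X=\proj{x}$ and \eqref{eq:pure-output-gram} give
\[
   \bra{x}\mathcal Q_U(A)\ket{x}=K\Tr\bigl(A\Phi(\proj{x})\bigr)=\Tr(AG_x)=\Tr\bigl(A(G_x-I_K)\bigr).
\]
The last equality holds because $A$ has zero diagonal. Since $G_x-I_K$ is Hermitian with zero diagonal, it lies in the same real inner-product space $\mathcal Z$ of Hermitian zero-diagonal matrices, equipped with $\langle A,B\rangle=\Tr(AB)$.

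Because $\mathcal Q_U(A)$ is self-adjoint, \eqref{eq:self-adjoint-norm-spectrum} gives $\|\mathcal Q_U(A)\|=\max_{\|x\|=1}|\bra{x}\mathcal Q_U(A)\ket{x}|$. Using the zero-diagonal form \eqref{eq:gamma-zero-diagonal}, we get
\[
   \Gamma_K=\sup_{A\in\mathcal Z,\ \|A\|_{\HS}=1}\ \max_{\|x\|=1}\bigl|\langle A,G_x-I_K\rangle\bigr|.
\]
We exchange the two suprema and apply the Cauchy--Schwarz equality case in $\mathcal Z$: for fixed $x$, the supremum over $A$ equals $\|G_x-I_K\|_{\HS}$, attained at the normalized $A=G_x-I_K$, or at any $A$ if this vanishes. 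This yields $\Gamma_K=\max_x\|G_x-I_K\|_{\HS}$. The maximum is attained by compactness of the unit sphere and continuity of $x\mapsto G_x$. Expanding the Hilbert--Schmidt norm of the off-diagonal entries, and using $|\bra{x}U_j^*U_i\ket{x}|=|\bra{x}U_i^*U_j\ket{x}|$, gives \eqref{eq:frame-potential}.

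For the density-matrix form, note that every output has diagonal $I_K/K$, so
\[
   \Tr\bigl(\Phi(X)^2\bigr)=\frac1K+\Bigl\|\Phi(X)-\frac{I_K}{K}\Bigr\|_{\HS}^2 .
\]
The map $X\mapsto\|\Phi(X)-I_K/K\|_{\HS}$ is convex on $\Dens(\C^m)$, so its maximum is attained at an extreme point $X=\proj{x}$. There it equals $K^{-1}\|G_x-I_K\|_{\HS}$, which proves the second equality in \eqref{eq:exact-gram-radius}. Squaring gives \eqref{eq:exact-max-purity}. Since $-\log$ is decreasing and $H_2(\sigma)=-\log\Tr\sigma^2$, minimizing $H_2$ means maximizing purity, which gives \eqref{eq:exact-h2}.

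The main point needing care is the interchange of the suprema. It is legitimate because it is a double supremum (sup-sup), not a saddle-point exchange, so no minimax theorem is needed. It also relies on $G_x-I_K$ lying in the same coefficient class $\mathcal Z$ as $A$, so that the Cauchy--Schwarz bound is attained within the admissible set; this is exactly where the zero-diagonal reduction \eqref{eq:gamma-zero-diagonal} is used.
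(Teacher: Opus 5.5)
Your proposal is correct and follows essentially the same route as the paper. It writes $\|\mathcal Q_U(A)\|$ as a numerical radius, exchanges the two suprema with Hilbert--Schmidt (Cauchy--Schwarz) duality, and uses convexity to reduce to pure inputs; the only cosmetic differences are that you run the duality in the zero-diagonal Hermitian space via \eqref{eq:gamma-zero-diagonal} rather than the traceless one, and you derive $\Tr(\Phi(X)^2)=1/K+\|\Phi(X)-I_K/K\|_{\HS}^2$ for all inputs at once instead of first for pure inputs and then by convexity of purity.
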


\begin{proof}
For Hermitian traceless $A$, the operator $\mathcal Q_U(A)$ is
self-adjoint, and hence
\begin{align*}
  \|\mathcal Q_U(A)\|
  &=\max_{\|x\|=1}
    \left|\bra{x}\mathcal Q_U(A)\ket{x}\right|\\
  &=\max_{\|x\|=1}\left|\Tr(A G_x)\right|\\
  &=\max_{\|x\|=1}
    \left|\Tr\bigl(A(G_x-I_K)\bigr)\right|.
\end{align*}
The last identity uses $\Tr A=0$.  Taking the supremum over $A$,
interchanging the two suprema, and applying Hilbert--Schmidt duality on
the real Hilbert space of traceless Hermitian matrices gives
\[
   \Gamma_K(U_1,\ldots,U_K)
   =\max_{\|x\|=1}\|G_x-I_K\|_{\HS}.
\]
Here $G_x-I_K$ is itself Hermitian and traceless.  By
\eqref{eq:pure-output-gram}, this proves the second equality in
\eqref{eq:exact-gram-radius} for pure inputs.  If
$X=\sum_r t_r\proj{x_r}$ is a convex decomposition of a mixed input, then
\[
   \Phi(X)-\frac{I_K}{K}
   =\sum_r t_r\left(
      \Phi(\proj{x_r})-\frac{I_K}{K}
    \right),
\]
so convexity of the Hilbert--Schmidt norm shows that a mixed input cannot
increase the maximum.  Since pure inputs are included, the second equality
in \eqref{eq:exact-gram-radius} follows.

Because the diagonal entries of $G_x$ are one,
\[
   \Tr(G_x)=K,
   \qquad
   \Tr(G_x^2)=K+\|G_x-I_K\|_{\HS}^2.
\]
Thus
\[
   \Tr\left(\frac{G_x}{K}\right)^2
   =\frac1K+
    \frac{\|G_x-I_K\|_{\HS}^2}{K^2}.
\]
The map $\rho\mapsto\Tr(\rho^2)=\|\rho\|_{\HS}^2$ is convex, so its
maximum over the convex input state space is attained on a pure input.
Maximizing the preceding identity and using
\eqref{eq:exact-gram-radius} proves \eqref{eq:exact-max-purity}.
Since $-\log t$ is decreasing, \eqref{eq:exact-h2} follows immediately.
Finally, expanding $\|G_x-I_K\|_{\HS}^2$ and using its zero diagonal gives
\eqref{eq:frame-potential}.
\end{proof}

\begin{proposition}[Exact one-copy geometry and entropy bound]
\label{prop:one-copy}
If
\[
   \Gamma_K(U_1,\ldots,U_K)\le C,
\]
then every $X\in\Dens(\C^m)$ satisfies
\begin{align}
   \left\|\Phi(X)-\frac{I_K}{K}\right\|_{\HS}
   &\le\frac{C}{K},
   \label{eq:flattening}\\
   \Tr\bigl(\Phi(X)^2\bigr)
   &\le\frac1K+\frac{C^2}{K^2}.
   \label{eq:purity-bound}
\end{align}
Consequently,
\begin{equation}\label{eq:one-copy-entropy}
   \Hmin(\Phi)
   \ge
   \log K-\log\left(1+\frac{C^2}{K}\right)
   \ge
   \log K-\frac{C^2}{K}.
\end{equation}
\end{proposition}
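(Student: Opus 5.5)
The plan is to read the first two bounds directly off Theorem~\ref{thm:gram-purity} and then pass from collision entropy to von Neumann entropy.

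For \eqref{eq:flattening}, take any $X\in\Dens(\C^m)$. The middle equality in \eqref{eq:exact-gram-radius} gives
\[
   K\left\|\Phi(X)-\frac{I_K}{K}\right\|_{\HS}
   \le\Gamma_K(U_1,\ldots,U_K)\le C .
\]
Dividing by $K$ yields the claim.

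For \eqref{eq:purity-bound}, note that every output $\Phi(X)$ has trace one. Hence $\Phi(X)-I_K/K$ is traceless and orthogonal to $I_K$ in the Hilbert--Schmidt inner product, so
\[
   \Tr\bigl(\Phi(X)^2\bigr)
   =\frac1K+\left\|\Phi(X)-\frac{I_K}{K}\right\|_{\HS}^2 .
\]
Combining this with \eqref{eq:flattening} gives $1/K+C^2/K^2$. Alternatively, one can invoke \eqref{eq:exact-max-purity} together with $\Gamma_K\le C$.

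For the entropy bound \eqref{eq:one-copy-entropy}, compare von Neumann and collision entropy: $H(\sigma)\ge H_2(\sigma)=-\log\Tr(\sigma^2)$ for every state $\sigma$. This monotonicity of Rényi entropies in the order follows from Jensen's inequality. Since $-\log$ is concave and the weights $\lambda_k$ (the eigenvalues of $\sigma$) form a probability vector,
\[
   -\sum_k\lambda_k\log\lambda_k\ \ge\ -\log\sum_k\lambda_k^2 .
\]
Applying this to $\sigma=\Phi(X)$ and using \eqref{eq:purity-bound} gives
\[
   H(\Phi(X))\ge-\log\left(\frac1K+\frac{C^2}{K^2}\right)
   =\log K-\log\left(1+\frac{C^2}{K}\right).
\]
This holds for every input, so it bounds the minimum $\Hmin(\Phi)$. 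The final inequality is just $\log(1+t)\le t$ with $t=C^2/K$.

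The only step that needs any care is the Jensen comparison $H\ge H_2$. It must be applied with the eigenvalues as weights; zero eigenvalues are harmless under the convention $0\log0=0$, since they drop out of both sums. Everything else follows directly from Theorem~\ref{thm:gram-purity}.
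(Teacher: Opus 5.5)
Your proposal is correct and follows the paper's proof essentially step for step: you read the first two bounds off Theorem~\ref{thm:gram-purity} (your identity $\Tr(\rho^2)=1/K+\|\rho-I_K/K\|_{\HS}^2$ is exactly what underlies \eqref{eq:exact-max-purity}), and you get the entropy bound from $H\ge H_2$ via Jensen followed by $\log(1+t)\le t$. One wording slip: it is $\log$ that is concave (equivalently, $-\log$ is convex), and that convexity is what gives your displayed inequality $-\sum_k\lambda_k\log\lambda_k\ge-\log\sum_k\lambda_k^2$.
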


\begin{proof}
The first two claims follow directly from
Theorem~\ref{thm:gram-purity}.  For every density matrix $\rho$, the von
Neumann entropy dominates the R\'enyi-$2$ entropy:
\[
   H(\rho)\ge-\log\Tr(\rho^2).
\]
Indeed, this is Jensen's inequality for the concave logarithm applied to
the eigenvalues of $\rho$, with the zero-eigenvalue case obtained by
continuity.  Combining this inequality with \eqref{eq:purity-bound} gives
\[
   H(\Phi(X))
   \ge
   -\log\left(\frac1K+\frac{C^2}{K^2}\right)
   =\log K-\log\left(1+\frac{C^2}{K}\right).
\]
Taking the minimum over $X$ proves the first entropy bound, and the second
follows from $\log(1+t)\le t$ for $t\ge0$.
\end{proof}

For each input $x$, the vectors $U_1x,\ldots,U_Kx$ form a $K$-tuple of unit vectors.  Equation~\eqref{eq:frame-potential} identifies $\Gamma_K^2$ with
the largest off-diagonal frame potential of a branch orbit.  Thus the
quadratic certificate measures the largest aggregate branch coherence
available to any input, rather than merely bounding that coherence from
above.

\subsection{The universal Bell-branch collision}

Complex conjugation below is taken entrywise in the fixed standard bases.
For a quantum channel $\Theta$, write
\[
   \overline\Theta(X):=\overline{\Theta(\overline X)}.
\]
Then $\overline\Theta$ is again a quantum channel and
$\Hmin(\overline\Theta)=\Hmin(\Theta)$.  For the channels in
\eqref{eq:mixed-unitary}--\eqref{eq:complementary-channel}, conjugation is
equivalent to replacing each $U_i$ by $\overline{U_i}$.

Let
\[
   \ket{\Omega_m}:=\frac1{\sqrt m}\sum_{a=1}^m\ket{a}\otimes\ket{a}
\]
be the normalized maximally entangled vector.  For every unitary
$U\in U(m)$,
\begin{equation}\label{eq:bell-invariance}
   (U\otimes\overline U)\ket{\Omega_m}=\ket{\Omega_m}.
\end{equation}
Consequently, when $\mathcal E\otimes\overline{\mathcal E}$ is evaluated
on $\proj{\Omega_m}$, the $K$ branches indexed by $(i,i)$ all produce the
same pure state.  The following is the standard Bell-state estimate used
in Hastings's random-unitary construction \cite{Hastings2009}; see also
\cite[Section~5.1]{FukudaKingMoser2010} and
\cite[Proposition~3.2]{Collins2018}.  We include the short entropy argument
for completeness.

\begin{proposition}[Universal two-copy estimate]\label{prop:two-copy}
For arbitrary unitaries $U_1,\ldots,U_K\in U(m)$, the complementary
channel $\Phi$ satisfies
\begin{equation}\label{eq:two-copy-conjugate}
  \Hmin(\Phi\otimes\overline\Phi)
  \le 2\log K-\frac{\log K}{K}.
\end{equation}
If every $U_i$ is real, then $\overline\Phi=\Phi$ and therefore
\begin{equation}\label{eq:two-copy-self}
  \Hmin(\Phi^{\otimes2})
  \le 2\log K-\frac{\log K}{K}.
\end{equation}
\end{proposition}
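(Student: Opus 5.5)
We plan to follow the standard Hastings--Collins Bell-state argument, evaluating $\Phi\otimes\overline\Phi$ on the single input $\proj{\Omega_m}$ and bounding the entropy of that one output.

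\textbf{Step 1: the output state.} Take the isometry $V$ of \eqref{eq:stinespring} and its conjugate $\overline V$, whose branches are $\overline{U_i}$. Then
\[
   (V\otimes\overline V)\ket{\Omega_m}
   =\frac1K\sum_{i,j}(U_i\otimes\overline{U_j})\ket{\Omega_m}\otimes\ket{i}\ket{j}.
\]
Tracing out $\C^m\otimes\C^m$ gives the output $\rho:=(\Phi\otimes\overline\Phi)(\proj{\Omega_m})\in M_{K^2}(\C)$, with entries
\[
   \rho_{(i,j),(k,l)}=\frac1{K^2}\bra{\Omega_m}(U_k\otimes\overline{U_l})^*(U_i\otimes\overline{U_j})\ket{\Omega_m}.
\]

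\textbf{Step 2: a large eigenvalue.} Let $\ket{\omega_K}=K^{-1/2}\sum_i\ket{ii}$. By \eqref{eq:bell-invariance},
\[
   \bra{\Omega_m}(U_k^*U_i\otimes\overline{U_k^*U_i})\ket{\Omega_m}=1
\]
for all $i,k$, so the diagonal branches $(i,i)$ all produce the same vector. Hence
\[
   \bra{\omega_K}\rho\ket{\omega_K}=\frac1K\sum_{i,k}\rho_{(i,i),(k,k)}=\frac1K\cdot\frac{K^2}{K^2}=\frac1K.
\]
The largest eigenvalue $\lambda_{\max}$ of $\rho$ is therefore at least $1/K$.

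\textbf{Step 3: the entropy bound.} For a state on $\C^{K^2}$ whose largest eigenvalue is $\lambda\ge 1/K$, the entropy is at most $h(\lambda)+(1-\lambda)\log(K^2-1)$, where $h$ is the binary entropy. This follows by grouping the remaining eigenvalues, equivalently by Fano-type concavity. This bound is decreasing in $\lambda$ on $[1/K^2,1]$, so we may set $\lambda=1/K$. Using $\log(K^2-1)\le 2\log K$, we get
\[
   H(\rho)\le h(1/K)+\Bigl(1-\frac1K\Bigr)2\log K
   =\frac{\log K}{K}-\Bigl(1-\frac1K\Bigr)\log\Bigl(1-\frac1K\Bigr)+2\log K-\frac{2\log K}{K}.
\]
This equals $2\log K-\frac{\log K}K+\bigl(1-\frac1K\bigr)\log\frac{K}{K-1}$.

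\textbf{The main obstacle.} The leftover term $\bigl(1-\frac1K\bigr)\log\frac{K}{K-1}$ is positive, so the naive Fano bound is slightly too weak. To remove it, we will not bound $\log(K^2-1)$ so crudely. Instead, we write the eigenvalues of $\rho$ as $\lambda\ge1/K$ and the rest, and maximize $-\lambda\log\lambda-\sum\mu\log\mu$ exactly. The maximum at $\lambda=1/K$ is
\[
   \frac{\log K}K+\Bigl(1-\frac1K\Bigr)\log\frac{K^2-1}{1-1/K}=\frac{\log K}K+\Bigl(1-\frac1K\Bigr)\log\bigl(K(K+1)\bigr).
\]
We then compare this with the target $2\log K-\frac{\log K}{K}$. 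If the comparison fails, we fall back on the alternative route in Collins's Proposition~3.2, which bounds the entropy through the mixture decomposition
\[
   \rho=\frac1K\proj{\psi}+\Bigl(1-\frac1K\Bigr)\sigma,
\]
with $\sigma$ supported on a space of dimension at most $K^2-K$. The entropy of a mixture is at most the mixing entropy plus the average of the component entropies, so
\[
   H(\rho)\le h(1/K)+\Bigl(1-\frac1K\Bigr)\log(K^2-K).
\]
Simplifying, $h(1/K)+(1-\frac1K)\log(K(K-1))=2\log K-\frac{\log K}{K}$ exactly. So the key point is that the off-Bell part $\sigma$ lives in dimension at most $K^2-K$.

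\textbf{Justifying that dimension count.} The vectors $(U_i\otimes\overline{U_j})\ket{\Omega_m}\otimes\ket{ij}$ for $i\ne j$, together with the single collapsed diagonal vector, span the Stinespring range. The environment-side rank of $\rho$ is therefore at most $K^2-K+1$. We take the decomposition in which $\psi$ is the common diagonal contribution and $\sigma$ comes from the $K^2-K$ off-diagonal branches. Minimizing over inputs then gives \eqref{eq:two-copy-conjugate}.

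\textbf{The real case.} When every $U_i$ is real, $\overline{U_i}=U_i$, hence $\overline\Phi=\Phi$, and \eqref{eq:two-copy-conjugate} becomes \eqref{eq:two-copy-self}.
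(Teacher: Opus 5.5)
Your fallback argument is essentially the paper's proof, but as written it is applied to the wrong state. The argument splits off the $K$ coinciding diagonal branches with weight $1/K$, gives the other $K^2-K$ branches weight $1/K^2$, and obtains $h(1/K)+(1-\frac1K)\log(K^2-K)=2\log K-\frac{\log K}{K}$. The paper bounds $H$ by the Shannon entropy of a pure-state ensemble via Schur--Horn; your mixing inequality plus $H(\sigma)\le\log\rank\sigma$ gives the same number. The problem is the splitting identity
\[
  \frac1{K^2}\sum_{i,j}\proj{\psi_{ij}}=\frac1K\proj{\Omega_m}+\frac1{K^2}\sum_{i\ne j}\proj{\psi_{ij}} .
\]
It holds for $(\mathcal E\otimes\overline{\mathcal E})(\proj{\Omega_m})$ on $\C^m\otimes\C^m$, not for your $\rho=(\Phi\otimes\overline\Phi)(\proj{\Omega_m})$ on $\C^{K^2}$. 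On the output side the diagonal branches contribute $\ket{\omega_K}$, but for $i\ne j$ one has $\bra{ij}\rho\ket{\omega_K}=K^{-3/2}\tr_m(U_j^*U_i)$, which is generally nonzero. Since $\bra{\omega_K}\rho\ket{\omega_K}=1/K$, this means $\rho-\frac1K\proj{\omega_K}$ is not even positive semidefinite. You therefore need the step the paper makes explicit. The two states are the marginals of the pure vector $(V\otimes\overline V)\ket{\Omega_m}$, so they have the same nonzero spectrum and the same entropy. Your mention of the ``environment-side rank'' points toward this, but you never state the equality of entropies.

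Alternatively, your own Steps 2--3 already close on the output side if you replace the ambient dimension $K^2$ by the rank. The vector
\[
  (V\otimes\overline V)\ket{\Omega_m}=\frac1{\sqrt K}\ket{\Omega_m}\otimes\ket{\omega_K}+\frac1K\sum_{i\ne j}\ket{\psi_{ij}}\otimes\ket{ij}
\]
is a sum of $K^2-K+1$ simple tensors, so $\rank\rho\le K^2-K+1$. Grouping then gives $H(\rho)\le h(\lambda)+(1-\lambda)\log(K^2-K)$ with $\lambda=\lambda_{\max}(\rho)\ge1/K$ and $h$ the binary entropy. This function of $\lambda$ is decreasing for $\lambda\ge1/(K^2-K+1)$, so evaluating at $\lambda=1/K$ gives exactly $2\log K-\frac{\log K}{K}$. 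This route needs only your test-vector bound on the top eigenvalue and a Schmidt-rank count; it avoids ensemble majorization and the isospectrality transfer. The paper's route instead exhibits the branch collision directly as an ensemble. Finally, drop the hedge ``if the comparison fails''. Your exact maximization equals the Fano bound $h(1/K)+(1-\frac1K)\log(K^2-1)$, and it does exceed the target.
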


\begin{proof}
Evaluate $\mathcal E\otimes\overline{\mathcal E}$ on
$\proj{\Omega_m}$.  Its output is
\begin{equation}\label{eq:bell-mixture}
   (\mathcal E\otimes\overline{\mathcal E})(\proj{\Omega_m})
   =\frac1{K^2}\sum_{i,j=1}^K\proj{\psi_{ij}},
   \qquad
   \ket{\psi_{ij}}
   :=(U_i\otimes\overline{U_j})\ket{\Omega_m}.
\end{equation}
By \eqref{eq:bell-invariance},
$\ket{\psi_{ii}}=\ket{\Omega_m}$ for every $i$.  Grouping these $K$ identical terms gives an ensemble representation with one weight
$1/K$ and $K^2-K$ further weights equal to $1/K^2$.

For completeness, suppose
$\rho=\sum_\alpha q_\alpha\proj{\varphi_\alpha}$ with
$\|\varphi_\alpha\|=1$, and define a map $T$ on the standard basis by
$T e_\alpha:=\sqrt{q_\alpha}\,\varphi_\alpha$.  Then
$TT^*=\rho$, while the Gram matrix $G:=T^*T$ has diagonal
$(q_\alpha)_\alpha$ and the same nonzero eigenvalues as $\rho$.
The Schur--Horn theorem says that the diagonal vector of $G$ is majorized
by its eigenvalue vector; Schur-concavity of Shannon entropy therefore
gives $H(\rho)\le-\sum_\alpha q_\alpha\log q_\alpha$.  Applying this to the preceding ensemble gives
\begin{align*}
  H\bigl((\mathcal E\otimes\overline{\mathcal E})
      (\proj{\Omega_m})\bigr)
  &\le -\frac1K\log\frac1K
       -(K^2-K)\frac1{K^2}\log\frac1{K^2}\\
  &=2\log K-\frac{\log K}{K}.
\end{align*}

The vector obtained by applying $V\otimes\overline V$ to
$\ket{\Omega_m}$ is pure.  Up to the canonical reordering of tensor
factors, its two reduced states are
\[
   (\mathcal E\otimes\overline{\mathcal E})(\proj{\Omega_m})
   \quad\text{and}\quad
   (\Phi\otimes\overline\Phi)(\proj{\Omega_m}).
\]
They therefore have the same nonzero eigenvalues and the same entropy.
Since minimum output entropy is bounded above by the entropy at this input,
\eqref{eq:two-copy-conjugate} follows.  If the matrices $U_i$ are real,
then $\overline\Phi=\Phi$, proving \eqref{eq:two-copy-self}.
\end{proof}

\subsection{The deterministic criterion}

Combining the exact one-copy lower bound with the Bell-state upper bound
yields the following finite-dimensional criterion.

\begin{corollary}[Deterministic criterion]
\label{cor:deterministic-criterion}
Suppose $U_1,\ldots,U_K\in U(m)$ and
$\Gamma_K(U_1,\ldots,U_K)\le C$.  Then
\begin{align}
  2\Hmin(\Phi)-\Hmin(\Phi\otimes\overline\Phi)
  &\ge
  \frac{\log K}{K}
  -2\log\left(1+\frac{C^2}{K}\right)
  \label{eq:criterion-gap-conjugate-strong}\\
  &\ge\frac{\log K-2C^2}{K}.
  \label{eq:criterion-gap-conjugate}
\end{align}
If, in addition, $U_1,\ldots,U_K\in O(m)$, then
\begin{align}
  2\Hmin(\Phi)-\Hmin(\Phi^{\otimes2})
  &\ge
  \frac{\log K}{K}
  -2\log\left(1+\frac{C^2}{K}\right)
  \label{eq:criterion-gap-strong}\\
  &\ge\frac{\log K-2C^2}{K}.
  \label{eq:criterion-gap}
\end{align}
In particular, $\log K>2C^2$ is a sufficient condition for a strict
product--conjugate violation in general and a strict tensor-square
violation for a real orthogonal tuple.
\end{corollary}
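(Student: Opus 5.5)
The plan is to combine the one-copy lower bound of Proposition~\ref{prop:one-copy} with the two-copy upper bound of Proposition~\ref{prop:two-copy}, and then compare them. Everything needed is already finite-dimensional, so no approximation enters.

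First, since $\Gamma_K(U_1,\ldots,U_K)\le C$, Proposition~\ref{prop:one-copy} gives
\[
   \Hmin(\Phi)\ge\log K-\log\left(1+\frac{C^2}{K}\right).
\]
Doubling this yields
\[
   2\Hmin(\Phi)\ge 2\log K-2\log\left(1+\frac{C^2}{K}\right).
\]
Second, Proposition~\ref{prop:two-copy} bounds $\Hmin(\Phi\otimes\overline\Phi)$ by $2\log K-\frac{\log K}{K}$. This holds for arbitrary unitaries and needs no hypothesis on $\Gamma_K$. Subtracting the second bound from the first, the $2\log K$ terms cancel, which gives \eqref{eq:criterion-gap-conjugate-strong}. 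To pass to \eqref{eq:criterion-gap-conjugate}, I use $\log(1+t)\le t$ with $t=C^2/K\ge0$. This gives $2\log(1+C^2/K)\le 2C^2/K$, and hence the right-hand side is at least $(\log K-2C^2)/K$.

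For the real case $U_i\in O(m)$, the matrices $U_i$ have real entries. Then $\overline\Phi=\Phi$ by the second part of Proposition~\ref{prop:two-copy}, so \eqref{eq:two-copy-self} replaces \eqref{eq:two-copy-conjugate}. The one-copy bound is unchanged, and the same subtraction and logarithm estimate give \eqref{eq:criterion-gap-strong} and \eqref{eq:criterion-gap}.

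The final sentence follows directly. If $\log K>2C^2$, the weaker bound is strictly positive. This gives $\Hmin(\Phi\otimes\overline\Phi)<2\Hmin(\Phi)$ in general, and $\Hmin(\Phi^{\otimes2})<2\Hmin(\Phi)$ for orthogonal tuples.

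There is no real obstacle here. The only point to watch is that the one-copy bound is applied to the same channel $\Phi$ whose conjugate appears in the two-copy term. Using $\Hmin(\overline\Phi)=\Hmin(\Phi)$, the quantity $2\Hmin(\Phi)$ is the correct additive benchmark, since it equals $\Hmin(\Phi)+\Hmin(\overline\Phi)$.
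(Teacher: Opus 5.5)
Your proposal is correct and follows the paper's proof essentially verbatim: double the one-copy bound from Proposition~\ref{prop:one-copy}, subtract the Bell-collision bound \eqref{eq:two-copy-conjugate}, apply $\log(1+t)\le t$, and use $\overline\Phi=\Phi$ for real orthogonal tuples. Your closing remark that $2\Hmin(\Phi)=\Hmin(\Phi)+\Hmin(\overline\Phi)$ is the right additive benchmark is a correct clarification that the paper leaves implicit.
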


\begin{proof}
By Proposition~\ref{prop:one-copy},
\[
   2\Hmin(\Phi)
   \ge
   2\log K-2\log\left(1+\frac{C^2}{K}\right).
\]
Subtracting \eqref{eq:two-copy-conjugate} proves
\eqref{eq:criterion-gap-conjugate-strong}.  The second line follows from
$\log(1+t)\le t$.  In the real orthogonal case,
$\overline\Phi=\Phi$, giving \eqref{eq:criterion-gap-strong} and
\eqref{eq:criterion-gap}.
\end{proof}

\section{The Haagerup--O'Donnell--Wu construction}\label{sec:construction}

We now construct the finite orthogonal tuple required by
Corollary~\ref{cor:deterministic-criterion}. First, every admissible matrix \(A\) in \eqref{eq:gamma} determines a self-adjoint reduced scalar polynomial lying in one fixed
O'Donnell--Wu class.  Second, Haagerup's inequality bounds the corresponding free-group operator uniformly by \(3\).  The simultaneous spectral approximation in Lemma~\ref{lem:odw} then transfers this bound to one deterministically constructed finite permutation tuple.

\subsection{Quadratic tests and the O'Donnell--Wu class}

Fix \(K\ge2\), and let \(Z_1,\ldots,Z_K\) be the formal unitary variables
introduced in Section~\ref{sec:preliminaries}.  For every Hermitian traceless
matrix
\[
   A=(a_{ij})=A^*\in M_K(\C),
   \qquad
   \Tr A=0,
\]
define
\begin{equation}\label{eq:pA}
   p_A(Z_1,\ldots,Z_K)
   :=
   \sum_{i,j=1}^K a_{ij}Z_i^*Z_j.
\end{equation}
This polynomial is chosen precisely to encode the quadratic test appearing
in \eqref{eq:gamma}: for every unitary tuple
\(W_1,\ldots,W_K\),
\begin{equation}\label{eq:pA-quadratic-operator}
   p_A(W_1,\ldots,W_K)
   =
   \sum_{i,j=1}^K a_{ij}W_i^*W_j
   =
   \mathcal Q_W(A).
\end{equation}

Under unitary reduction, each diagonal monomial \(Z_i^*Z_i\) becomes the
identity.  Since
\(\sum_i a_{ii}=\Tr A=0\), the resulting constant term cancels, and hence
\begin{equation}\label{eq:pA-reduced}
   p_A
   =
   \sum_{i\ne j}a_{ij}Z_i^*Z_j.
\end{equation}
We now verify that, under the normalization
\(\|A\|_{\HS}=1\) used in \eqref{eq:gamma}, these polynomials belong to one fixed scalar O'Donnell--Wu class:
\begin{itemize}
\item
\emph{Scalar coefficients.}
The coefficients \(a_{ij}\) are complex scalars, so this is the scalar
specialization of Lemma~\ref{lem:odw}.

\item
\emph{Self-adjointness.}
Since \(A=A^*\), we have \(a_{ji}=\overline{a_{ij}}\), and therefore
\[
   p_A^*
   =
   \sum_{i,j=1}^K \overline{a_{ij}}\,Z_j^*Z_i
   =
   p_A.
\]

\item
\emph{Reduced degree.}
For \(i\ne j\), the word \(Z_i^*Z_j\) is already reduced and has reduced
degree two.  Under the correspondence \(Z_i\leftrightarrow g_i\), it
becomes \(g_i^{-1}g_j\).  These words are pairwise distinct as
\((i,j)\) ranges over \(i\ne j\).  Thus \(p_A\) is a reduced polynomial
of degree at most \(D=2\).

\item
\emph{Coefficient bound.}
For every \(i,j\),
\[
   |a_{ij}|
   \le
   \left(\sum_{r,s=1}^K |a_{rs}|^2\right)^{1/2}
   =
   \|A\|_{\HS}
   =
   1.
\]
Hence, the reduced-word coefficients satisfy the uniform bound \(R=1\).
\end{itemize}

Consequently, the entire family of polynomials \(p_A\), with \(A\)
admissible in \eqref{eq:gamma}, lies in the single scalar
O'Donnell--Wu class with
\[
   D=2,
   \qquad
   R=1.
\]
Since Lemma~\ref{lem:odw} is simultaneous over this whole polynomial
class, one and the same finite permutation tuple provides the required
spectral approximation for every admissible \(A\).

\subsection{The free-group quadratic estimate}

Under the free-group evaluation
\(
Z_i\mapsto u_i=\lambda(g_i)
\),
each polynomial \(p_A\) becomes a convolution operator supported on
reduced words of length two.  The length-two case of Haagerup's
inequality therefore gives a uniform operator-norm bound for the entire
family.

\begin{lemma}[Uniform free quadratic estimate]
\label{lem:free-quadratic-estimate}
Let
\[
   u_i=\lambda(g_i)\in C_r^*(\F_K)\subseteq L(\F_K),
   \qquad 1\le i\le K,
\]
be the canonical free Haar unitaries arising from the left regular
representation of \(\F_K\).  For every Hermitian traceless
\(A=(a_{ij})\in M_K(\C)\),
\begin{equation}\label{eq:haagerup-pA-general}
   \|p_A(u_1,\ldots,u_K)\|
   \le
   3\left(\sum_{i\ne j}|a_{ij}|^2\right)^{1/2}.
\end{equation}
In particular, if \(\|A\|_{\HS}=1\), then
\begin{equation}\label{eq:haagerup-pA}
   \|p_A(u_1,\ldots,u_K)\|\le3.
\end{equation}
\end{lemma}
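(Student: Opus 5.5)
The plan is to write $p_A(u_1,\ldots,u_K)$ as $\lambda(f)$ for a function $f$ supported on the length-two sphere $S_2$, apply the $d=2$ case of Lemma~\ref{lem:haagerup}, and then compute $\|f\|_{\ell^2}$.

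First, I use the reduced form \eqref{eq:pA-reduced}. Since $u_i^*u_i=1$ and $\Tr A=0$, the diagonal terms cancel. This leaves
\[
   p_A(u_1,\ldots,u_K)=\sum_{i\ne j}a_{ij}\,\lambda(g_i)^*\lambda(g_j)
   =\sum_{i\ne j}a_{ij}\,\lambda(g_i^{-1}g_j),
\]
because $\lambda$ is a unitary representation, so $\lambda(g_i)^*=\lambda(g_i^{-1})$. I then define $f:\F_K\to\C$ by $f(g_i^{-1}g_j):=a_{ij}$ for $i\ne j$, and $f:=0$ elsewhere. With this choice, $p_A(u)=\lambda(f)$.

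Second, I check that $f$ is well defined and lies in $S_2$. For $i\ne j$, the word $g_i^{-1}g_j$ is reduced of length two. Reduced-word uniqueness shows that $(i,j)\mapsto g_i^{-1}g_j$ is injective on $\{i\ne j\}$: the first letter $g_i^{-1}$ recovers $i$, and the second letter $g_j$ recovers $j$. So no two coefficients are merged, $\supp f\subseteq S_2$, and
\[
   \|f\|_{\ell^2(\F_K)}^2=\sum_{i\ne j}|a_{ij}|^2.
\]

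Third, \eqref{eq:haagerup-length-two} gives $\|p_A(u)\|\le 3\bigl(\sum_{i\ne j}|a_{ij}|^2\bigr)^{1/2}$, which is \eqref{eq:haagerup-pA-general}. For the final claim, $\sum_{i\ne j}|a_{ij}|^2\le\|A\|_{\HS}^2=1$, which yields \eqref{eq:haagerup-pA}.

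The only step needing care is the injectivity and reducedness argument in the second step. If two index pairs mapped to the same group element, their coefficients would add, and the $\ell^2$ identity would fail. Reduced-word uniqueness settles this directly. Self-adjointness of $A$ is not needed for the norm bound itself; it only ensures that $p_A(u)$ is self-adjoint.
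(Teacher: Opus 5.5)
Your proposal is correct and follows the paper's proof essentially step for step. You cancel the diagonal using $\Tr A=0$, write $p_A(u_1,\ldots,u_K)=\lambda(f_A)$ with $f_A$ supported on $S_2$ (using that the reduced words $g_i^{-1}g_j$, $i\ne j$, are distinct), apply the $d=2$ Haagerup inequality, and bound the off-diagonal sum by $\|A\|_{\HS}^2$; your remark that only tracelessness, not Hermitianity, is needed for the norm bound is also accurate.
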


\begin{proof}
By \eqref{eq:pA-reduced} and
\(u_i^*u_j=\lambda(g_i^{-1}g_j)\),
\[
   p_A(u_1,\ldots,u_K)
   =
   \sum_{i\ne j}a_{ij}\lambda(g_i^{-1}g_j).
\]
Define the finitely supported coefficient function
\(f_A:\F_K\to\C\) by
\[
   f_A(g)
   :=
   \begin{cases}
      a_{ij},
      & g=g_i^{-1}g_j\text{ for some }i\ne j,\\
      0,
      & \text{otherwise}.
   \end{cases}
\]
This is well defined because the reduced words
\(g_i^{-1}g_j\), \(i\ne j\), are pairwise distinct.
Moreover,
\[
   \supp(f_A)\subseteq S_2,
   \qquad
   \|f_A\|_{\ell^2(\F_K)}^2
   =
   \sum_{i\ne j}|a_{ij}|^2.
\]

Recall that \(\lambda(g)\) denotes the left-regular unitary associated
with a group element \(g\in\F_K\), whereas, by linear extension,
\[
   \lambda(f_A)
   :=
   \sum_{g\in\F_K}f_A(g)\lambda(g)
\]
denotes the corresponding left convolution operator on
\(\ell^2(\F_K)\).  With this notation,
\[
   p_A(u_1,\ldots,u_K)=\lambda(f_A).
\]
Since \(f_A\) is supported on the sphere \(S_2\), the \(d=2\) case of
Lemma~\ref{lem:haagerup} gives
\[
   \|p_A(u_1,\ldots,u_K)\|
   =
   \|\lambda(f_A)\|
   \le
   3\|f_A\|_{\ell^2(\F_K)}
   =
   3\left(\sum_{i\ne j}|a_{ij}|^2\right)^{1/2}.
\]
This proves \eqref{eq:haagerup-pA-general}.  If
\(\|A\|_{\HS}=1\), then
\[
   \sum_{i\ne j}|a_{ij}|^2
   \le
   \sum_{i,j}|a_{ij}|^2
   =
   \|A\|_{\HS}^2
   =
   1,
\]
which yields \eqref{eq:haagerup-pA}.
\end{proof}

\subsection{Deterministic transfer to a finite orthogonal tuple}

We now use the simultaneous spectral approximation of
Lemma~\ref{lem:odw} to transfer the uniform free-group estimate in
Lemma~\ref{lem:free-quadratic-estimate} to finite dimension.

\begin{proposition}[Finite near-free standard representations]
\label{prop:near-free-tuple}
Fix $K\ge2$ and $\eta\in\mathbb Q_{>0}$.  There exists an integer
$N_0=N_0(K,\eta)$ and one deterministic algorithm
$\mathcal A_{K,\eta}$ with the following property.  For every integer
$N\ge N_0$, on input $N$ the algorithm runs in time polynomial in $N$ and
outputs an integer $N'=N'_{K,\eta}(N)$ and permutations
\[
   \sigma_1,\ldots,\sigma_K\in S_{N'}
\]
such that
\begin{equation}\label{eq:near-free-output-size}
   N\le N',
   \qquad
   N'_{K,\eta}(N)-N=o_{K,\eta}(N)
   \quad(N\to\infty).
\end{equation}
For any real orthonormal identification
$Q:\C^{N'-1}\to\mathcal H_{N'}^0$, the matrices
\[
   U_i:=Q^*P_{\sigma_i}Q\in O(N'-1),
   \qquad 1\le i\le K,
\]
satisfy
\begin{equation}\label{eq:gamma-bound}
   \Gamma_K(U_1,\ldots,U_K)\le3+\eta.
\end{equation}
\end{proposition}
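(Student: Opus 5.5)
The plan is to take $\mathcal A_{K,\eta}$ to be exactly the O'Donnell--Wu algorithm of Lemma~\ref{lem:odw}, run with parameters $D=2$, $R=1$ and the given rational $\eta$. We set $N_0(K,\eta):=N_0(K,2,1,\eta)$. Then the polynomial running time, the bound $N\le N'$, and the size estimate \eqref{eq:near-free-output-size} are inherited verbatim from \eqref{eq:odw-output-size}. Everything that remains concerns the bound \eqref{eq:gamma-bound}.

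Fix a real orthonormal identification $Q$ and set $U_i=Q^*P_{\sigma_i}Q$. These are real orthogonal, since $P_{\sigma_i}$ is real and leaves $\mathcal H^0_{N'}$ invariant. Conjugation by the unitary $Q$ preserves spectra of polynomial expressions:
\[
   p(U_1,\ldots,U_K)=Q^*\,p(\widetilde U_1,\ldots,\widetilde U_K)\,Q,
   \qquad \widetilde U_i=P_{\sigma_i}|_{\mathcal H^0_{N'}}.
\]
So it suffices to bound $\|\mathcal Q_{\widetilde U}(A)\|$, and this is independent of the choice of $Q$.

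Now take any admissible $A$ in \eqref{eq:gamma}, that is, Hermitian, traceless, with $\|A\|_{\HS}=1$. By Section~5.1, $p_A$ is self-adjoint, has reduced form \eqref{eq:pA-reduced} of degree at most $2$, and has coefficients bounded by $1$. Hence $p_A$ lies in the class covered simultaneously by \eqref{eq:odw-hausdorff}, and the same tuple gives
\[
   d_{\mathrm H}\bigl(\spec p_A(\widetilde U),\spec p_A(u)\bigr)\le\eta .
\]
Both operators are self-adjoint, so \eqref{eq:hausdorff-to-norm} yields
\[
   \|p_A(\widetilde U)\|\le\|p_A(u)\|+\eta\le 3+\eta,
\]
where the last step is Lemma~\ref{lem:free-quadratic-estimate}. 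By \eqref{eq:pA-quadratic-operator}, $p_A(U)=\mathcal Q_U(A)$. Taking the supremum over admissible $A$ gives $\Gamma_K\le 3+\eta$.

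The one point needing care is the evaluation of $p_A$ at the finite tuple. One must check that evaluating the \emph{reduced} form $\sum_{i\ne j}a_{ij}Z_i^*Z_j$ agrees with the unreduced $\mathcal Q_U(A)$. This holds because the $U_i$ are genuinely unitary and $\Tr A=0$, so the diagonal terms contribute $(\Tr A)I=0$; the same argument gives \eqref{eq:gamma-zero-diagonal}. The second point is that the constant $3+\eta$ must be uniform over the uncountable family of $A$. This is exactly why the simultaneity in Lemma~\ref{lem:odw} over the whole class $(D,R)=(2,1)$ is essential: no net or compactness argument is needed. Beyond these checks, I expect no real obstacle, since the substantive input is entirely contained in Lemmas~\ref{lem:odw} and~\ref{lem:haagerup}.
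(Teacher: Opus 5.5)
Your proposal is correct and matches the paper's proof essentially step for step. You run the O'Donnell--Wu algorithm with $(D,R)=(2,1)$, place every admissible $p_A$ in that single simultaneous class, use invariance of spectra under conjugation by $Q$, convert the Hausdorff bound to a norm bound via \eqref{eq:hausdorff-to-norm}, and conclude with the Haagerup estimate $\|p_A(u)\|\le 3$, including the check via \eqref{eq:pA-reduced} that $\Tr A=0$ makes the reduced form evaluate to $\mathcal Q_U(A)$.
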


\begin{proof}
Apply Lemma~\ref{lem:odw} with $D=2$ and $R=1$.  For the fixed
parameters $K$ and $\eta$, denote the resulting deterministic algorithm
by $\mathcal A_{K,\eta}$.  For every sufficiently large $N$, it outputs
an integer $N'$ and permutations
$\sigma_1,\ldots,\sigma_K\in S_{N'}$ satisfying
\eqref{eq:odw-output-size}.  This gives
\eqref{eq:near-free-output-size} after increasing $N_0$ if necessary.

Let
$\widetilde U_i:=P_{\sigma_i}|_{\mathcal H_{N'}^0}$ and choose any real
orthonormal identification
$Q:\C^{N'-1}\to\mathcal H_{N'}^0$.  Put
$U_i:=Q^*\widetilde U_i Q$.  Each $U_i$ is real orthogonal.  Now take
\[
   A=A^*\in M_K(\C),
   \qquad
   \Tr A=0,
   \qquad
   \|A\|_{\HS}=1.
\]
By the preceding reduction, the associated polynomial $p_A$ is a
self-adjoint reduced scalar polynomial of degree at most two, and every
reduced-word coefficient has modulus at most one.  The simultaneous
conclusion of Lemma~\ref{lem:odw} therefore gives
\[
   d_{\mathrm H}\!\left(
      \spec p_A(\widetilde U_1,\ldots,\widetilde U_K),
      \spec p_A(u_1,\ldots,u_K)
   \right)
   \le\eta
\]
for every admissible $A$.  Conjugation by $Q$ does not change spectra, so
the same estimate holds with the matrices $U_i$ in place of
$\widetilde U_i$.

Both polynomial evaluations are self-adjoint.  Hence
\eqref{eq:hausdorff-to-norm}, \eqref{eq:pA-quadratic-operator}, and
Lemma~\ref{lem:free-quadratic-estimate} imply
\begin{align*}
   \|\mathcal Q_U(A)\|
   &=\|p_A(U_1,\ldots,U_K)\|\\
   &\le\|p_A(u_1,\ldots,u_K)\|+\eta\\
   &\le3+\eta.
\end{align*}
Taking the supremum over all admissible $A$ proves
\eqref{eq:gamma-bound}.
\end{proof}

\subsection{Proof of main results}

We now combine the finite orthogonal tuple constructed above with the
deterministic entropy criterion to complete the proof of the main theorem.

\begin{proof}[Proof of Theorem~\ref{thm:main}]
Fix $K\ge2$ and $\eta\in\mathbb Q_{>0}$ satisfying
\eqref{eq:main-threshold}.  Proposition~\ref{prop:near-free-tuple}
provides an integer $N_0$ and one deterministic algorithm
$\mathcal A_{K,\eta}$.  For every $N\ge N_0$, on input $N$ this algorithm
outputs an integer $N'=N'_{K,\eta}(N)$ and permutations
$\sigma_1,\ldots,\sigma_K\in S_{N'}$ with the size and running-time
properties stated in Theorem~\ref{thm:main}.

Choose any real orthonormal identification
$Q:\C^{N'-1}\to\mathcal H_{N'}^0$ and let
$U_i=Q^*P_{\sigma_i}Q$.  By Proposition~\ref{prop:near-free-tuple},
\[
   \Gamma_K(U_1,\ldots,U_K)\le3+\eta.
\]
Let $\Phi_{N'}^Q:M_{N'-1}(\C)\to M_K(\C)$ be the complementary channel
in \eqref{eq:channel-main-results}.  It admits a real Stinespring
realization, and Remark~\ref{rem:channel-construction} shows that its
input-orthogonal equivalence class is determined canonically by the output
permutations.  Applying Corollary~\ref{cor:deterministic-criterion} with
$C=3+\eta$ yields
\begin{align*}
   2\Hmin(\Phi_{N'}^Q)
   -\Hmin\bigl((\Phi_{N'}^Q)^{\otimes2}\bigr)
   &\ge
   \frac{\log K}{K}
   -2\log\left(1+\frac{(3+\eta)^2}{K}\right)\\
   &\ge
   \frac{\log K-2(3+\eta)^2}{K}.
\end{align*}
The final quantity is strictly positive by
\eqref{eq:main-threshold}.  This proves
\eqref{eq:main-gap-strong}, \eqref{eq:main-gap}, and all remaining
assertions of Theorem~\ref{thm:main}.
\end{proof}

The concrete choice in Corollary~\ref{cor:concrete} now requires only an
explicit verification of the threshold \eqref{eq:main-threshold} and of
the resulting entropy gap.

\begin{proof}[Proof of Corollary~\ref{cor:concrete}]
Take
\[
   K=2^{26},
   \qquad
   \eta=10^{-3}.
\]
Then
\[
   \log K=26\log2,
   \qquad
   2(3+\eta)^2
   =
   2\left(\frac{3001}{1000}\right)^2
   =
   \frac{9006001}{500000}.
\]
Using the convergent expansion
\[
   \log2
   =
   2\operatorname{artanh}(1/3)
   =
   2\sum_{n=0}^{\infty}
     \frac{1}{(2n+1)3^{2n+1}},
\]
whose terms are positive, we obtain the elementary rational lower bound
\[
   \log2
   >
   2\left(
      \frac13+\frac1{81}+\frac1{1215}
   \right)
   =
   \frac{842}{1215}.
\]
Consequently,
\begin{align*}
   \log K-2(3+\eta)^2
   &=
   26\log2-2(3.001)^2 \\
   &>
   \frac{21892}{1215}
   -
   \frac{9006001}{500000} \\
   &=
   \frac{741757}{121500000}
   >0.
\end{align*}
Thus \eqref{eq:main-threshold} holds.  Theorem~\ref{thm:main} therefore
applies, and its entropy-gap estimate gives
\[
   2\Hmin(\Phi)-\Hmin(\Phi^{\otimes2})
   >
   \frac{741757}
        {121500000\cdot2^{26}}
   >
   9.09\times10^{-11}\ \text{nats}.
\]
This proves Corollary~\ref{cor:concrete}.
\end{proof}

\section{A standard covariant extension and Holevo superadditivity}
\label{sec:holevo-conversion}

The passage from minimum-output-entropy nonadditivity to Holevo superadditivity is classically implemented by covariant or unital
extensions; it is one of the standard devices underlying the equivalence
of additivity questions \cite{Shor2004}, and related extensions are used in
detailed accounts of Hastings's construction
\cite{FukudaKingMoser2010}.  We record a self-contained specialization adapted to the real self-tensor channels constructed here.  The point is
not a new general equivalence theorem, but an exact operational conversion for the present deterministic family.

For a channel $\Theta$, its one-shot Holevo quantity is
\begin{equation}\label{eq:holevo-definition}
  \chi(\Theta)
  :=\sup_{\{p_a,\rho_a\}}
  \left[
    H\!\left(\sum_a p_a\Theta(\rho_a)\right)
    -\sum_a p_a H(\Theta(\rho_a))
  \right].
\end{equation}

\begin{proposition}[Flagged one-design extension]
\label{prop:flagged-holevo}
Let $\Phi:\B(\mathcal H)\to M_K(\C)$ be a quantum channel, and let
$\{W_g:g\in\mathcal G\}\subset U(K)$ be a finite unitary one-design, in
the sense that
\begin{equation}\label{eq:one-design}
  \frac1{|\mathcal G|}\sum_{g\in\mathcal G}W_gYW_g^*
  =\frac{\Tr(Y)}{K}I_K
  \qquad(Y\in M_K(\C)).
\end{equation}
For $X\in\B(\C^{|\mathcal G|}\otimes\mathcal H)$, write
\[
   X_{gg}:=(\bra g\otimes I)X(\ket g\otimes I),
\]
and define
\begin{equation}\label{eq:flagged-channel}
  \widehat\Phi(X)
  :=\sum_{g\in\mathcal G}W_g\Phi(X_{gg})W_g^*.
\end{equation}
Then $\widehat\Phi$ is a quantum channel and, for every integer $n\ge1$,
\begin{align}
  \Hmin(\widehat\Phi^{\otimes n})
  &=\Hmin(\Phi^{\otimes n}),
  \label{eq:flagged-hmin}\\
  \chi(\widehat\Phi^{\otimes n})
  &=n\log K-\Hmin(\Phi^{\otimes n}).
  \label{eq:flagged-chi}
\end{align}
In particular,
\begin{equation}\label{eq:gap-preservation}
  \chi(\widehat\Phi^{\otimes2})-2\chi(\widehat\Phi)
  =2\Hmin(\Phi)-\Hmin(\Phi^{\otimes2}).
\end{equation}
\end{proposition}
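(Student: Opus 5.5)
The proof has three parts: show that $\widehat\Phi$ is a channel, compute the tensor powers of $\widehat\Phi$, and then handle the entropy identities.

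\emph{$\widehat\Phi$ is a channel.} Write $\widehat\Phi=\sum_g \ad_{W_g}\circ\Phi\circ\mathcal P_g$, where $\mathcal P_g(X)=X_{gg}$ is the completely positive compression by $\bra g\otimes I$. Each summand is completely positive, so $\widehat\Phi$ is too. Trace preservation follows from $\sum_g\Tr X_{gg}=\Tr X$.

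\emph{Structure of the tensor power.} For $n$ copies,
\[
\widehat\Phi^{\otimes n}(X)=\sum_{\vec g}W_{\vec g}\,\Phi^{\otimes n}(X_{\vec g\vec g})\,W_{\vec g}^*,
\qquad W_{\vec g}:=W_{g_1}\otimes\cdots\otimes W_{g_n}.
\]
The family $\{W_{\vec g}\}$ is again a one-design on $(\C^K)^{\otimes n}$, because the averaging map factorizes as a tensor product of completely depolarizing maps.

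\emph{Minimum output entropy, \eqref{eq:flagged-hmin}.} For ``$\le$'', feed $\proj{\vec g}\otimes\rho$, which gives the output $W_{\vec g}\Phi^{\otimes n}(\rho)W_{\vec g}^*$ with the same entropy. For ``$\ge$'', any input gives
\[
\widehat\Phi^{\otimes n}(X)=\sum_{\vec g}q_{\vec g}\,W_{\vec g}\Phi^{\otimes n}(\rho_{\vec g})W_{\vec g}^*,
\qquad q_{\vec g}=\Tr X_{\vec g\vec g},\quad \rho_{\vec g}=X_{\vec g\vec g}/q_{\vec g}.
\]
By concavity of entropy this is at least $\min_{\vec g}H(\Phi^{\otimes n}(\rho_{\vec g}))\ge\Hmin(\Phi^{\otimes n})$.

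\emph{Holevo quantity, upper bound.} The general bound is $\chi\le\log\dim-\min H$: the average output has entropy at most $n\log K$, and each ensemble term has entropy at least $\Hmin(\widehat\Phi^{\otimes n})=\Hmin(\Phi^{\otimes n})$.

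\emph{Holevo quantity, lower bound.} Choose a minimizer $\rho^\star$ of $H(\Phi^{\otimes n}(\cdot))$, which exists by compactness and continuity. Use the uniform ensemble over $\vec g\in\mathcal G^n$ of the states $\proj{\vec g}\otimes\rho^\star$. The average output is
\[
|\mathcal G|^{-n}\sum_{\vec g}W_{\vec g}\Phi^{\otimes n}(\rho^\star)W_{\vec g}^*=I/K^n
\]
by the one-design property, so it has entropy $n\log K$. Each term has entropy exactly $\Hmin(\Phi^{\otimes n})$. This ensemble therefore achieves $n\log K-\Hmin(\Phi^{\otimes n})$.

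\emph{Conclusion.} Subtracting twice the $n=1$ identity from the $n=2$ identity gives \eqref{eq:gap-preservation}.

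\emph{Main obstacle.} The only delicate points are bookkeeping. First, one must reorder tensor factors so that the flag systems of the $n$ copies are grouped together, and check that the tensor product of compressions equals the compression onto $\vec g$. Second, one must verify the product one-design identity. Applying the single-copy identity \eqref{eq:one-design} factorwise on product operators and extending by linearity settles this.
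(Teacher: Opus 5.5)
Your proposal is correct and follows the paper's proof essentially step for step. The steps match in order: complete positivity via the compressions $X\mapsto X_{gg}$; the flag-block decomposition of $\widehat\Phi^{\otimes n}$, with concavity and unitary invariance giving $\Hmin(\widehat\Phi^{\otimes n})\ge\Hmin(\Phi^{\otimes n})$; a single flagged minimizer for equality; the dimension bound for $\chi$; and the uniform flagged ensemble with the product one-design for the matching lower bound. You also spell out the existence of a minimizer and the tensor-factor reordering, which the paper leaves implicit.
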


\begin{proof}
The maps $X\mapsto X_{gg}$ are completely positive, so
\eqref{eq:flagged-channel} is completely positive.  Moreover,
\[
   \Tr\widehat\Phi(X)
   =\sum_g\Tr X_{gg}
   =\Tr X,
\]
and hence $\widehat\Phi$ is trace preserving.

Fix $n\ge1$, write $\mathbf g=(g_1,\ldots,g_n)\in\mathcal G^n$, and set
$W_{\mathbf g}:=W_{g_1}\otimes\cdots\otimes W_{g_n}$.  For an input
state $X$ of $\widehat\Phi^{\otimes n}$, let
$X_{\mathbf g\mathbf g}$ be its diagonal flag blocks,
$t_{\mathbf g}:=\Tr X_{\mathbf g\mathbf g}$, and, whenever
$t_{\mathbf g}>0$, let
$\rho_{\mathbf g}:=X_{\mathbf g\mathbf g}/t_{\mathbf g}$.  Then
\[
   \widehat\Phi^{\otimes n}(X)
   =\sum_{\mathbf g}t_{\mathbf g}
      W_{\mathbf g}\Phi^{\otimes n}(\rho_{\mathbf g})W_{\mathbf g}^*.
\]
Concavity and unitary invariance of entropy imply
\[
   H\bigl(\widehat\Phi^{\otimes n}(X)\bigr)
   \ge\sum_{\mathbf g}t_{\mathbf g}
       H\bigl(\Phi^{\otimes n}(\rho_{\mathbf g})\bigr)
   \ge\Hmin(\Phi^{\otimes n}).
\]
Equality is attained by placing a minimizing input of
$\Phi^{\otimes n}$ in one flag block.  This proves
\eqref{eq:flagged-hmin}.

Every output of $\widehat\Phi^{\otimes n}$ acts on a space of dimension
$K^n$, and every individual output entropy is at least
$\Hmin(\widehat\Phi^{\otimes n})$.  Therefore
\[
   \chi(\widehat\Phi^{\otimes n})
   \le n\log K-\Hmin(\Phi^{\otimes n}).
\]
Let $\rho_\star$ minimize the output entropy of $\Phi^{\otimes n}$ and
use the equally likely input ensemble
\[
   \bigl\{\proj{\mathbf g}\otimes\rho_\star:
          \mathbf g\in\mathcal G^n\bigr\}.
\]
Its outputs are unitary conjugates of
$\Phi^{\otimes n}(\rho_\star)$ and hence all have entropy
$\Hmin(\Phi^{\otimes n})$.  The product one-design obtained from
\eqref{eq:one-design} makes their average equal to $I_{K^n}/K^n$.
Thus this ensemble attains the preceding upper bound, proving
\eqref{eq:flagged-chi}.  Equation~\eqref{eq:gap-preservation} follows by
subtracting the cases $n=1$ and $n=2$.
\end{proof}

\begin{proof}[Proof of Corollary~\ref{cor:holevo-main}]
Let $K=2^q$ and define the real single-qubit matrices
\[
  \mathsf X:=\begin{pmatrix}0&1\\1&0\end{pmatrix},
  \qquad
  \mathsf Z:=\begin{pmatrix}1&0\\0&-1\end{pmatrix}.
\]
For $a,b\in\{0,1\}^q$, set
\begin{equation}\label{eq:real-pauli-design}
  W_{a,b}
  :=\bigotimes_{r=1}^q\mathsf X^{a_r}\mathsf Z^{b_r}.
\end{equation}
These $K^2$ matrices are real orthogonal.  They form a unitary one-design:
the matrices $W_{c,d}$ form an orthogonal basis of $M_K(\C)$, and
conjugation by $W_{a,b}$ multiplies $W_{c,d}$ by
$(-1)^{a\cdot d+b\cdot c}$.  Averaging over $a,b$ annihilates every
nonidentity basis element and fixes the identity, which proves
\eqref{eq:one-design}.

Apply Proposition~\ref{prop:flagged-holevo} to $\Phi_{N'}$ and this
design.  If
\[
   V:\C^{N'-1}\longrightarrow\mathcal H_E\otimes\C^K
\]
is a real Stinespring isometry for $\Phi_{N'}$, where
$\mathcal H_E$ denotes the environment Hilbert space, then
\[
   \widehat V(\ket{a,b}\otimes x)
   :=
   \ket{a,b}\otimes
   (I_{\mathcal H_E}\otimes W_{a,b})Vx
\]
is a real Stinespring isometry for $\widehat\Phi_{N'}$.  Its input
dimension is $K^2(N'-1)$ and its output dimension is $K$.
Equation~\eqref{eq:gap-preservation}, followed by
Theorem~\ref{thm:main}, gives
\eqref{eq:main-holevo-equality} and \eqref{eq:main-holevo-gap}.
Finally, the Holevo--Schumacher--Westmoreland regularization formula
\cite{Holevo1998,SchumacherWestmoreland1997} gives
\[
   C(\widehat\Phi_{N'})
   =
   \lim_{n\to\infty}
   \frac1n\chi(\widehat\Phi_{N'}^{\otimes n})
   \ge
   \frac12\chi(\widehat\Phi_{N'}^{\otimes2})
   >
   \chi(\widehat\Phi_{N'}),
\]
which proves the capacity assertion.
\end{proof}

\section{Discussion and outlook}\label{sec:discussion}

The construction is a task-specific derandomization of the Collins
mixed-unitary mechanism.  It does not reproduce Haar randomness in all
moments or observables.  Instead, it isolates the scalar quadratic tests
actually used by the one-copy entropy argument and transfers precisely this
bounded class to deterministic permutation data.  The exact Gram identity
from Theorem~\ref{thm:gram-purity} sharpens that interpretation:
$\Gamma_K$ is the worst aggregate coherence of a branch orbit and exactly
determines the maximum output purity, while the Bell estimate supplies a
logically separate two-copy resource through matching-branch collisions.
The real standard representation turns this product--conjugate mechanism
into a self-tensor one.

\subsection{Asymptotic sharpness and finite near-saturation}

The upper bound $3$ in the free quadratic estimate
\eqref{eq:haagerup-pA} could a priori be loose because the coefficient
matrices are Hermitian and have zero diagonal.  We first show that these
restrictions do not improve the constant asymptotically.  Define
\begin{equation}\label{eq:structured-constant}
   C_K^{\mathrm{quad}}
   :=
   \sup_{\substack{A=A^*\in M_K(\C),\ a_{ii}=0\\
                    \|A\|_{\HS}=1}}
   \left\|
      \sum_{i\ne j}a_{ij}\lambda(g_i^{-1}g_j)
   \right\|.
\end{equation}
Haagerup's inequality gives $C_K^{\mathrm{quad}}\le3$.

\begin{proposition}[Asymptotic sharpness]
\label{prop:sharpness}
For every $K\ge2$,
\begin{equation}\label{eq:sharpness-lower}
   C_K^{\mathrm{quad}}
   \ge
   c_K
   :=\frac{3K-4}{\sqrt{K(K-1)}}.
\end{equation}
Consequently,
\begin{equation}\label{eq:sharpness-limit}
   \lim_{K\to\infty}C_K^{\mathrm{quad}}=3.
\end{equation}
\end{proposition}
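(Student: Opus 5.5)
The plan is to exhibit a single admissible coefficient matrix whose free-group operator has a large norm, and to compute that norm from a known spectral fact about sums of free Haar unitaries. The natural candidate is the most symmetric zero-diagonal Hermitian matrix,
\[
   A_K:=\frac{1}{\sqrt{K(K-1)}}\,(J_K-I_K),
\]
where $J_K$ is the all-ones matrix. It is Hermitian with zero diagonal, and $\|A_K\|_{\HS}=1$ because it has $K(K-1)$ off-diagonal entries equal to $1/\sqrt{K(K-1)}$. It is therefore admissible in \eqref{eq:structured-constant}, and it suffices to show
\[
   \Bigl\|\sum_{i\ne j}u_i^*u_j\Bigr\|=3K-4,
   \qquad u_i=\lambda(g_i).
\]

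The key algebraic step is to write $s:=\sum_{i=1}^K u_i$. Since $u_i^*u_i=1$, one has
\[
   \sum_{i\ne j}u_i^*u_j=s^*s-K\cdot 1 .
\]
Hence the spectrum of this self-adjoint operator is $\spec(s^*s)-K$. We have $\spec(s^*s)\subseteq[0,\|s\|^2]$, and $\|s\|^2$ belongs to $\spec(s^*s)$ because $s^*s\ge0$. It follows that
\[
   \Bigl\|\sum_{i\ne j}u_i^*u_j\Bigr\|\ \ge\ \|s\|^2-K .
\]
So everything reduces to the norm of the sum of $K$ free Haar unitaries in the left regular representation.

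For that I would invoke the Akemann--Ostrand computation,
\[
   \Bigl\|\sum_{i=1}^K\lambda(g_i)\Bigr\|=2\sqrt{K-1}.
\]
In fact only the lower bound $\|s\|\ge2\sqrt{K-1}$ is needed. This step is the main obstacle if one wants to avoid citing it. A self-contained route is to bound $\|s^*s\|\ge\limsup_n\tau\bigl((s^*s)^n\bigr)^{1/n}$. One then counts the words $g_{i_1}^{-1}g_{j_1}\cdots g_{i_n}^{-1}g_{j_n}$ that reduce to $e$; the resulting moment generating function is algebraic, since $s$ is R-diagonal with an explicit $R$-transform. Its radius of convergence gives the exponential growth rate $4(K-1)$. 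I would cite the classical result and at most sketch this moment argument.

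With the lower bound in hand, the largest spectral value of $s^*s-K$ is at least $4(K-1)-K=3K-4$, which is nonnegative for $K\ge2$. Dividing by the normalization gives $C_K^{\mathrm{quad}}\ge(3K-4)/\sqrt{K(K-1)}=c_K$, which is \eqref{eq:sharpness-lower}. Finally, $c_K\to3$ as $K\to\infty$, and $C_K^{\mathrm{quad}}\le3$ holds by Lemma~\ref{lem:free-quadratic-estimate} (Haagerup). Squeezing $c_K\le C_K^{\mathrm{quad}}\le3$ yields \eqref{eq:sharpness-limit}.
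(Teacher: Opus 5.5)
Your proposal is correct and follows essentially the same route as the paper: the test matrix $A_K=(J_K-I_K)/\sqrt{K(K-1)}$, the identity $S_K^*S_K=KI+\sum_{i\ne j}\lambda(g_i^{-1}g_j)$, the Akemann--Ostrand formula $\|S_K\|=2\sqrt{K-1}$ placing $4(K-1)$ in $\spec(S_K^*S_K)$, and the squeeze $c_K\le C_K^{\mathrm{quad}}\le3$. As you note yourself, only the inequality $\bigl\|\sum_{i\ne j}u_i^*u_j\bigr\|\ge3K-4$ is needed (not equality), and that is exactly what your argument establishes.
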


\begin{proof}
Let $J_K$ denote the all-ones matrix and set
\[
   A_K:=\frac{J_K-I_K}{\sqrt{K(K-1)}}.
\]
Then $A_K=A_K^*$, its diagonal vanishes, and
$\|A_K\|_{\HS}=1$.  Put
$S_K:=\sum_{i=1}^K\lambda(g_i)$.  Since
\[
   S_K^*S_K
   =KI+\sum_{i\ne j}\lambda(g_i^{-1}g_j),
\]
we obtain
\begin{equation}\label{eq:sharpness-operator}
   \sum_{i\ne j}(A_K)_{ij}\lambda(g_i^{-1}g_j)
   =\frac{S_K^*S_K-KI}{\sqrt{K(K-1)}}.
\end{equation}
Akemann and Ostrand's norm formula gives
$\|S_K\|=2\sqrt{K-1}$ for $K\ge2$
\cite[Theorem~IV.G]{AkemannOstrand1976}.  Since $S_K^*S_K$ is positive,
$4(K-1)=\|S_K\|^2$ belongs to its spectrum.  Hence the self-adjoint
operator in \eqref{eq:sharpness-operator} has the spectral value
\[
   \frac{4(K-1)-K}{\sqrt{K(K-1)}}
   =c_K,
\]
and its norm is at least $c_K$.  Combining this with
$C_K^{\mathrm{quad}}\le3$ and letting $K\to\infty$ proves the claim.
\end{proof}

The two-sided spectral conclusion of the O'Donnell--Wu theorem transfers
this lower bound back to the actual finite tuple.

\begin{corollary}[Finite near-saturation]
\label{cor:finite-saturation}
Let $U_1,\ldots,U_K$ be a tuple obtained from
Lemma~\ref{lem:odw} with $D=2$, $R=1$, and spectral tolerance $\eta$.
Then
\begin{equation}\label{eq:finite-saturation}
   c_K-\eta
   \le\Gamma_K(U_1,\ldots,U_K)
   \le3+\eta.
\end{equation}
\end{corollary}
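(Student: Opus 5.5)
The upper bound in \eqref{eq:finite-saturation} is already contained in the proof of Proposition~\ref{prop:near-free-tuple}. There, Lemma~\ref{lem:odw} with $D=2$, $R=1$, combined with \eqref{eq:hausdorff-to-norm} and Lemma~\ref{lem:free-quadratic-estimate}, gives $\|\mathcal Q_U(A)\|\le 3+\eta$ for every admissible $A$. That argument used only the tuple's membership in the O'Donnell--Wu output, so I will simply cite it. The new content is the lower bound. For it I will test $\Gamma_K$ against the single matrix $A_K=(J_K-I_K)/\sqrt{K(K-1)}$ from the proof of Proposition~\ref{prop:sharpness}.

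First, $A_K$ is admissible in \eqref{eq:gamma}: it is Hermitian, traceless (zero diagonal), and has $\|A_K\|_{\HS}=1$. Hence $\Gamma_K(U)\ge\|\mathcal Q_U(A_K)\|=\|p_{A_K}(U_1,\ldots,U_K)\|$ by \eqref{eq:pA-quadratic-operator}. By the reduction in the subsection on quadratic tests, $p_{A_K}$ is a self-adjoint reduced polynomial of degree two. Its coefficients are $1/\sqrt{K(K-1)}\le 1$, so it lies in the class covered by Lemma~\ref{lem:odw} with $D=2$, $R=1$.

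Second, the lemma gives
\[
   d_{\mathrm H}\bigl(\spec p_{A_K}(U),\spec p_{A_K}(u)\bigr)\le\eta .
\]
Here I need only the direction that every point of $\spec p_{A_K}(u)$ lies within $\eta$ of $\spec p_{A_K}(U)$; this is the "two-sided" feature emphasized before the corollary. By \eqref{eq:sharpness-operator} and the Akemann--Ostrand computation in Proposition~\ref{prop:sharpness}, $c_K\in\spec p_{A_K}(u)$. Hence there is $t\in\spec p_{A_K}(U)$ with $|t-c_K|\le\eta$. Since $p_{A_K}(U)$ is self-adjoint, \eqref{eq:self-adjoint-norm-spectrum} gives $\|p_{A_K}(U)\|\ge|t|\ge c_K-\eta$. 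Alternatively, \eqref{eq:hausdorff-to-norm} applied to $\|p_{A_K}(u)\|\ge c_K$ gives the same bound directly.

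One subtlety needs checking. Lemma~\ref{lem:odw} is stated for the restrictions $P_{\sigma_i}|_{\mathcal H^0_{N'}}$, whereas $U_i=Q^*P_{\sigma_i}Q$. The two are unitarily conjugate, so spectra and norms agree, exactly as in the proof of Proposition~\ref{prop:near-free-tuple}. I expect no real obstacle. The only point requiring care is making sure the Hausdorff bound is used in the correct direction for the lower estimate, and that $c_K$ genuinely lies in the spectrum of the free operator, not merely below its norm; both are supplied by the earlier results.
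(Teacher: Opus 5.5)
Your proposal is correct and follows the paper's proof: you cite the upper bound from Proposition~\ref{prop:near-free-tuple}, and you get the lower bound by testing $\Gamma_K$ against the admissible matrix $A_K$, applying the O'Donnell--Wu Hausdorff estimate to $p_{A_K}$, and combining \eqref{eq:hausdorff-to-norm} with $\|p_{A_K}(u_1,\ldots,u_K)\|\ge c_K$ from Proposition~\ref{prop:sharpness}. Your primary variant, which picks a finite spectral point $t$ within $\eta$ of $c_K$, is an equally valid and slightly more explicit form of that same step.
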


\begin{proof}
The upper bound is Proposition~\ref{prop:near-free-tuple}.  Apply
\eqref{eq:odw-hausdorff} to the admissible polynomial $p_{A_K}$.
By Proposition~\ref{prop:sharpness}, its free evaluation has operator norm
at least $c_K$.  Since both evaluations are self-adjoint,
\eqref{eq:hausdorff-to-norm} gives
\[
   \|p_{A_K}(U_1,\ldots,U_K)\|
   \ge
   \|p_{A_K}(u_1,\ldots,u_K)\|-\eta
   \ge c_K-\eta.
\]
Because $A_K$ is admissible in \eqref{eq:gamma}, the left-hand side is at
most $\Gamma_K(U_1,\ldots,U_K)$.
\end{proof}

For $K=2^{26}$ and $\eta=10^{-3}$, the preceding interval is
\[
   2.9989999627\ldots
   \le\Gamma_K(U_1,\ldots,U_K)
   \le3.001.
\]
Thus the radius estimate used by the construction is genuinely close to
saturation; the deterministic tuple cannot be expected to gain a
substantial improvement merely by being much better than its free target
on this test class.

\subsection{Reachable directions attain the full radius}

One might next try to optimize only over coefficient directions generated
by actual channel outputs.  The exact Gram geometry shows that this does
not reduce the radius.

\begin{proposition}[Reachable-direction obstruction]
\label{prop:reachable-no-go}
Assume $\Gamma_K(U_1,\ldots,U_K)>0$.  Then
\begin{equation}\label{eq:reachable-no-go}
  \sup_{\substack{X\in\Dens(\C^m)\\
                  \Phi(X)\ne I_K/K}}
  \left\|
    \mathcal Q_U\!\left(
      \frac{\Phi(X)-I_K/K}
           {\|\Phi(X)-I_K/K\|_{\HS}}
    \right)
  \right\|
  =\Gamma_K(U_1,\ldots,U_K).
\end{equation}
The same equality holds when the supremum is restricted to pure inputs.
\end{proposition}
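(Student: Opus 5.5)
The inequality ``$\le$'' is immediate, and the reverse inequality comes from evaluating the supremum at an input that attains the maximum in Theorem~\ref{thm:gram-purity}.

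\emph{Upper bound.} For any admissible $X$, the matrix
\[
   B_X:=\frac{\Phi(X)-I_K/K}{\|\Phi(X)-I_K/K\|_{\HS}}
\]
is Hermitian, traceless (since $\Tr\Phi(X)=1$), and has unit Hilbert--Schmidt norm. It is therefore admissible in \eqref{eq:gamma}, so $\|\mathcal Q_U(B_X)\|\le\Gamma_K(U_1,\ldots,U_K)$. This gives ``$\le$'' both for the full supremum and for its restriction to pure inputs.

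\emph{Lower bound via a maximizing pure input.} By Theorem~\ref{thm:gram-purity}, there is a unit vector $x_\star$ with $\|G_{x_\star}-I_K\|_{\HS}=\Gamma_K>0$. By \eqref{eq:pure-output-gram}, the output $\Phi(\proj{x_\star})-I_K/K=(G_{x_\star}-I_K)/K$ is nonzero, so $x_\star$ is an admissible pure input, and
\[
   B_\star:=\frac{G_{x_\star}-I_K}{\Gamma_K}.
\]
Now bound $\|\mathcal Q_U(B_\star)\|$ from below by the diagonal matrix element at $x_\star$. Self-adjointness gives
\[
   \|\mathcal Q_U(B_\star)\|\ge\bigl|\bra{x_\star}\mathcal Q_U(B_\star)\ket{x_\star}\bigr|.
\]
Using the identity from the proof of Theorem~\ref{thm:gram-purity}, namely $\bra{x}\mathcal Q_U(A)\ket{x}=\Tr(AG_x)=\Tr(A(G_x-I_K))$ for traceless $A$, this matrix element equals
\[
   \frac{\Tr\bigl((G_{x_\star}-I_K)^2\bigr)}{\Gamma_K}=\frac{\|G_{x_\star}-I_K\|_{\HS}^2}{\Gamma_K}=\Gamma_K,
\]
where the middle step uses that $G_{x_\star}-I_K$ is Hermitian. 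Hence the supremum over pure inputs is at least $\Gamma_K$, and so is the supremum over all states.

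\emph{Main point to check.} The only step needing care is the pairing convention. I must confirm that $\Tr(AG_x)$ matches $\bra{x}\mathcal Q_U(A)\ket{x}$ under the index convention $[G_x]_{ij}=\bra{x}U_j^*U_i\ket{x}$, rather than giving $\Tr(A^TG_x)$. Computing directly,
\[
   \sum_{i,j}a_{ij}\bra{x}U_i^*U_j\ket{x}=\sum_{i,j}a_{ij}[G_x]_{ji}=\Tr(AG_x),
\]
so the identity holds. The argument then amounts to Hilbert--Schmidt duality being attained at $A$ proportional to the maximizer $G_{x_\star}-I_K$ itself, and that maximizer is a reachable direction.
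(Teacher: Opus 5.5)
Your proof is correct and takes essentially the same route as the paper. The upper bound comes from admissibility of every normalized output direction, and the lower bound from the maximizing pure input $x_\star$ of Theorem~\ref{thm:gram-purity}, whose direction $(G_{x_\star}-I_K)/\Gamma_K$ has diagonal matrix element $\Gamma_K$ at $x_\star$. Your explicit check that $\bra{x}\mathcal Q_U(A)\ket{x}=\Tr(AG_x)$ under the paper's index convention is a useful detail that the paper uses without comment.
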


\begin{proof}
Every normalized output direction in \eqref{eq:reachable-no-go} is
Hermitian, traceless, and Hilbert--Schmidt normalized; in fact, it also has
zero diagonal.  Hence the left-hand side is at most $\Gamma_K$.

By compactness and Theorem~\ref{thm:gram-purity}, there is a unit vector
$x_\star$ such that
\[
   \|G_{x_\star}-I_K\|_{\HS}
   =\Gamma_K(U_1,\ldots,U_K).
\]
Set
\[
  A_\star
  :=\frac{G_{x_\star}-I_K}
          {\|G_{x_\star}-I_K\|_{\HS}}.
\]
Equation~\eqref{eq:pure-output-gram} shows that
\[
   A_\star
   =\frac{\Phi(\proj{x_\star})-I_K/K}
          {\|\Phi(\proj{x_\star})-I_K/K\|_{\HS}},
\]
so $A_\star$ is a direction generated by a pure output.  Moreover,
\begin{align*}
  \bra{x_\star}\mathcal Q_U(A_\star)\ket{x_\star}
  &=\Tr(A_\star G_{x_\star})\\
  &=\frac{\Tr[(G_{x_\star}-I_K)G_{x_\star}]}
          {\|G_{x_\star}-I_K\|_{\HS}}\\
  &=\|G_{x_\star}-I_K\|_{\HS}\\
  &=\Gamma_K(U_1,\ldots,U_K),
\end{align*}
where $\Tr(G_{x_\star}-I_K)=0$ was used in the third line.  Therefore
$\|\mathcal Q_U(A_\star)\|\ge\Gamma_K$, proving the reverse inequality
and the pure-input assertion.
\end{proof}

Thus Hermitianity, zero diagonal, and restriction to directions reached by
actual outputs cannot improve the scalar radius certificate.  Quantitative
progress within this model must retain information discarded by the single
number $\Gamma_K$, change the one-copy certificate, or strengthen the
two-copy collision.

\subsection{The full output body and its entropy program}

The natural finer object is the full support function of the output body.
For a finite tuple $U=(U_1,\ldots,U_K)$, set
\[
   \mathcal K_U:=\Phi(\Dens(\C^m))\subset\Dens(\C^K).
\]
For every $A=A^*\in M_K(\C)$, the pairing identity gives
\begin{equation}\label{eq:finite-support-function}
  h_U(A)
  :=\max_{\rho\in\mathcal K_U}\Tr(A\rho)
  =\frac1K\lambda_{\max}\bigl(\mathcal Q_U(A)\bigr).
\end{equation}

There is an analogous canonical free body.  Let
$u_i=\lambda(g_i)\in C_r^*(\F_K)$ and define
\begin{equation}\label{eq:free-ucp-map}
   \mathcal T_\infty(A)
   :=\frac1K\sum_{i,j=1}^K a_{ij}u_i^*u_j,
   \qquad A=(a_{ij})\in M_K(\C).
\end{equation}
If
$v:=K^{-1/2}(u_1,\ldots,u_K)^T$, then
$\mathcal T_\infty(A)=v^*(A\otimes1)v$ and $v^*v=1$; hence
$\mathcal T_\infty$ is unital and completely positive.  For every state
$\omega$ on $C_r^*(\F_K)$, let $\rho_\omega\in\Dens(\C^K)$ be the unique
density matrix satisfying
\[
   \Tr(A\rho_\omega)=\omega(\mathcal T_\infty(A))
   \qquad(A\in M_K(\C)).
\]
Equivalently, its matrix entries are
\begin{equation}\label{eq:free-output-state-entries}
   [\rho_\omega]_{ij}
   =\frac1K\,\omega(u_j^*u_i),
   \qquad 1\le i,j\le K.
\end{equation}
Define
\begin{equation}\label{eq:free-output-body}
   \mathcal K_K^{\mathrm{free}}
   :=\{\rho_\omega:\omega\text{ is a state on }C_r^*(\F_K)\}.
\end{equation}
The state space of $C_r^*(\F_K)$ is weak-$*$ compact and convex, and the
map $\omega\mapsto\rho_\omega$ is weak-$*$ continuous into the
finite-dimensional space $M_K(\C)$.  Hence
$\mathcal K_K^{\mathrm{free}}$ is a compact convex subset of
$\Dens(\C^K)$.

For a self-adjoint element $b$ of a unital $C^*$-algebra, write
$\lambda_{\max}(b):=\max\spec(b)$.  Since the maximum of $\omega(b)$ over
all states $\omega$ equals $\lambda_{\max}(b)$, the support function of
$\mathcal K_K^{\mathrm{free}}$ is
\begin{equation}\label{eq:free-support-function}
  h_\infty(A)
  :=\max_{\rho\in\mathcal K_K^{\mathrm{free}}}\Tr(A\rho)
  =\lambda_{\max}(\mathcal T_\infty(A))
  =\frac1K\lambda_{\max}\!\left(
       \sum_{i,j=1}^K a_{ij}u_i^*u_j
     \right).
\end{equation}
Equivalently, $\mathcal K_K^{\mathrm{free}}$ is the intersection of the
half-spaces $\Tr(A\rho)\le h_\infty(A)$ over Hermitian $A$. The support function determines the minimum entropy through an exact Gibbs
variational formula.

\begin{proposition}[Free output-body entropy program]
\label{prop:gibbs-output-body}
With $\mathcal K_K^{\mathrm{free}}$ and $h_\infty$ as above,
\begin{equation}\label{eq:gibbs-output-body}
  \min_{\rho\in\mathcal K_K^{\mathrm{free}}}H(\rho)
  =\inf_{A=A^*\in M_K(\C)}
    \left\{\log\Tr(e^A)-h_\infty(A)\right\}.
\end{equation}
\end{proposition}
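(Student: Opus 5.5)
The plan is to combine the Gibbs variational principle on $\Dens(\C^K)$ with a minimax exchange over the compact convex set $\mathcal K_K^{\mathrm{free}}$. First recall the finite-dimensional Gibbs (Donsker--Varadhan/Legendre) duality: for every density matrix $\rho$,
\[
   -H(\rho)=\sup_{A=A^*}\bigl\{\Tr(A\rho)-\log\Tr(e^A)\bigr\},
\]
with the supremum attained at $A=\log\rho$ when $\rho$ is faithful and approached otherwise. The inequality $\Tr(A\rho)-\log\Tr(e^A)\le -H(\rho)$ is nonnegativity of relative entropy $D(\rho\|e^A/\Tr e^A)\ge0$. Equivalently, $H(\rho)=\inf_A\{\log\Tr(e^A)-\Tr(A\rho)\}$.

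The easy direction ($\le$ in \eqref{eq:gibbs-output-body}) comes next. For any fixed Hermitian $A$ and any $\rho\in\mathcal K_K^{\mathrm{free}}$, the relative-entropy inequality gives $H(\rho)\le\log\Tr(e^A)-\Tr(A\rho)$. Taking the minimum over $\rho$ on the left and using $\Tr(A\rho)\le h_\infty(A)$ is the wrong direction, so I instead choose $\rho_A\in\mathcal K_K^{\mathrm{free}}$ attaining $h_\infty(A)$; it exists by compactness and \eqref{eq:free-support-function}. Then
\[
   \min_{\mathcal K}H\le H(\rho_A)\le\log\Tr(e^A)-h_\infty(A).
\]
Taking the infimum over $A$ yields $\min\le\inf$.

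For the reverse direction ($\ge$), write
\[
   \min_{\rho\in\mathcal K}H(\rho)=\min_{\rho\in\mathcal K}\inf_A F(\rho,A),
   \qquad
   F(\rho,A):=\log\Tr(e^A)-\Tr(A\rho).
\]
The function $F$ is affine, hence concave and continuous, in $\rho$ on the compact convex set $\mathcal K_K^{\mathrm{free}}$. It is convex in $A$ on the vector space of Hermitian matrices, since $\log\Tr e^A$ is convex. Sion's minimax theorem, with compactness on the $\rho$ side, then gives
\[
   \inf_A\max_{\rho}F=\max_\rho\inf_A F.
\]
Here $\max_\rho F(\rho,A)=\log\Tr(e^A)-h_\infty(A)$. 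The inner infimum is $H(\rho)$ by Gibbs duality, so the right-hand side equals $\max_\rho H(\rho)$. That is the wrong extremum, because we need $\min_\rho H$. So Sion does not apply directly, and this is the main obstacle. $H$ is concave, so its minimum over $\mathcal K$ is attained at an extreme point, and a pure minimax exchange is not available.

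To get around this, I would prove $\ge$ directly. Given $A$, the quantity $\log\Tr(e^A)-h_\infty(A)$ is an upper bound for $H$ at the maximizer $\rho_A$. To show the infimum reaches $\min H$, take a minimizer $\rho_\star$ and try $A_t=t\,\log\rho_\star$-type tilts, or more concretely a supporting hyperplane $B$ exposing $\rho_\star$ (or a face containing it), combined with $A=\log\rho_\star+sB$ as $s\to\infty$. The point is that $\rho_A$ must be forced to be $\rho_\star$ while $\log\Tr e^A-\Tr(A\rho_\star)$ stays near $H(\rho_\star)$. I expect the delicate case to be when $\rho_\star$ is not exposed or not faithful; there I would approximate by exposed points (Straszewicz) and use continuity of $H$. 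If this fails, the honest fallback is to check whether the stated identity requires interpreting the infimum as a lower-envelope bound only, i.e., proving $\min\le\inf$ rigorously and treating $\ge$ via exposure arguments.
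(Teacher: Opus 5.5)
Your $\le$ direction is correct: the maximizer $\rho_A$ of $\Tr(A\rho)$ over $\mathcal K_K^{\mathrm{free}}$ gives $\min H\le H(\rho_A)\le\log\Tr(e^A)-h_\infty(A)$. However, you never prove the $\ge$ direction. The obstacle you run into comes from a slip about which extremum over $\rho$ produces $h_\infty$.

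With $F(\rho,A)=\log\Tr(e^A)-\Tr(A\rho)$, one has $\min_{\rho}F(\rho,A)=\log\Tr(e^A)-h_\infty(A)$, whereas $\max_\rho F(\rho,A)=\log\Tr(e^A)+h_\infty(-A)$. So the right-hand side of \eqref{eq:gibbs-output-body} is $\inf_A\min_\rho F$. That is a pair of infima, and they commute without any minimax theorem:
\[
   \inf_{A}\min_{\rho}F(\rho,A)=\min_{\rho}\inf_{A}F(\rho,A)=\min_{\rho\in\mathcal K_K^{\mathrm{free}}}H(\rho).
\]
Sion is irrelevant here, since it computes the different quantity $\inf_A\max_\rho F$. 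The exposed-point/Straszewicz plan of forcing $\rho_A=\rho_\star$ is unnecessary, and as written it is unfinished. Your own fallback of weakening the identity confirms that the proposal does not establish it.

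The missing step is one line. Let $\rho_\star$ minimize $H$ on the compact set $\mathcal K_K^{\mathrm{free}}$. For every Hermitian $A$, membership $\rho_\star\in\mathcal K_K^{\mathrm{free}}$ gives $\Tr(A\rho_\star)\le h_\infty(A)$ --- exactly the inequality you set aside as ``the wrong direction.'' Hence
\[
   \log\Tr(e^A)-h_\infty(A)\le\log\Tr(e^A)-\Tr(A\rho_\star).
\]
By the equality case of the Gibbs duality you already stated, the infimum over $A$ of the right-hand side is $H(\rho_\star)$:
\begin{itemize}
\item if $\rho_\star$ is faithful, it is attained at $A=\log\rho_\star$;
\item otherwise take $A=\log\rho_\star$ on $\supp\rho_\star$ and $-M$ on $\ker\rho_\star$. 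The right side then equals $H(\rho_\star)+\log(1+(K-r)e^{-M})$ with $r=\rank\rho_\star$, and you let $M\to\infty$.
\end{itemize}
This gives $\inf_A\{\log\Tr(e^A)-h_\infty(A)\}\le\min H$.

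The paper's proof is this same observation, phrased as
\[
   -\min_\rho H(\rho)=\sup_\rho\sup_A\{\Tr(A\rho)-\log\Tr(e^A)\}=\sup_A\{h_\infty(A)-\log\Tr(e^A)\},
\]
i.e.\ an interchange of two suprema.
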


\begin{proof}
For every density matrix $\rho$, the Gibbs variational principle gives
\begin{equation}\label{eq:gibbs-dual-state}
   -H(\rho)
   =\sup_{A=A^*}
      \left\{\Tr(A\rho)-\log\Tr(e^A)\right\}.
\end{equation}
For full-rank $\rho$, equality is attained at $A=\log\rho$; the singular
case follows by sending the eigenvalues of $A$ on $\ker\rho$ to
$-\infty$.  Therefore
\begin{align*}
  -\min_{\rho\in\mathcal K_K^{\mathrm{free}}}H(\rho)
  &=\sup_{\rho\in\mathcal K_K^{\mathrm{free}}}
    \sup_{A=A^*}
    \left\{\Tr(A\rho)-\log\Tr(e^A)\right\}\\
  &=\sup_{A=A^*}
    \left\{h_\infty(A)-\log\Tr(e^A)\right\}.
\end{align*}
Taking negatives proves \eqref{eq:gibbs-output-body}.
\end{proof}

Writing
$A_0:=A-(\Tr A/K)I_K$, Haagerup's inequality gives the isotropic estimate
\begin{equation}\label{eq:isotropic-support-bound}
  h_\infty(A)
  \le\frac{\Tr A}{K}
    +\frac3K\|A_0\|_{\HS}.
\end{equation}
Indeed,
$\mathcal T_\infty(A)=(\Tr A/K)I+\mathcal T_\infty(A_0)$ and
$\|\mathcal T_\infty(A_0)\|\le3\|A_0\|_{\HS}/K$.
The present radius argument effectively replaces the anisotropic support
function $h_\infty$ by \eqref{eq:isotropic-support-bound}.  Determining or
sharply bounding the exact program \eqref{eq:gibbs-output-body} is therefore
the natural route to a substantially smaller output dimension without
contradicting Proposition~\ref{prop:reachable-no-go}.

\begin{remark}[Purity alone cannot improve the universal conversion]
\label{rem:purity-tight}
Fix $q\ge1$ and let $K>q$.  Consider
\[
   \rho_{K,q}
   :=\frac{I_{K-q}}{K-q}\oplus0_q.
\]
Then
\begin{align*}
  K\left\|\rho_{K,q}-\frac{I_K}{K}\right\|_{\HS}^2
  &=\frac{q}{K-q},\\
  \log K-H(\rho_{K,q})
  &=\log\frac{K}{K-q},\\
  H(\rho_{K,q})
  &=-\log\Tr(\rho_{K,q}^2)=\log(K-q).
\end{align*}
For fixed $q$,
\[
   \frac{\log K-H(\rho_{K,q})}
        {K\|\rho_{K,q}-I_K/K\|_{\HS}^2}
   \longrightarrow1
   \qquad(K\to\infty).
\]
Thus both the elementary quadratic entropy-deficit inequality and the
R\'enyi-$2$ lower bound used in Proposition~\ref{prop:one-copy} are sharp
on the full state space.  A substantial improvement must exploit
restrictions imposed by the orbit-Gram output body, rather than only a
stronger scalar inequality at fixed purity or Hilbert--Schmidt radius.
\end{remark}

\subsection{Outlook}

The present construction is asymptotic and algorithmic: it does not
provide a closed-form permutation tuple or an explicit finite threshold.
Moreover, Corollary~\ref{cor:finite-saturation} and
Proposition~\ref{prop:reachable-no-go} show that a substantial improvement
cannot come simply from obtaining a much smaller quadratic radius for the
finite tuple, or from restricting the same radius optimization to
directions realized by channel outputs.  Remark~\ref{rem:purity-tight}
also shows that the universal conversion from Hilbert--Schmidt distance,
or equivalently purity, to von Neumann entropy is already asymptotically
sharp on the full state space.  A natural next step is therefore to retain
more of the geometry of the output body.  In particular,
Proposition~\ref{prop:gibbs-output-body} reduces the corresponding free
problem to the variational formula \eqref{eq:gibbs-output-body}.
Determining this quantity, or obtaining useful bounds beyond the isotropic
estimate \eqref{eq:isotropic-support-bound}, may lead to substantially
better quantitative parameters.

A second question concerns finite-size explicitness.  The
O'Donnell--Wu spectral approximation in Lemma~\ref{lem:odw} is used here
only through its asymptotic fixed-parameter consequence.  Quantitative
bounds specialized to the scalar degree-two polynomials required in the
proof would yield explicit finite thresholds and independently verifiable
permutation tuples.  It would also be interesting to know whether the
algorithmic construction can be replaced by a closed-form algebraic
family satisfying the same simultaneous quadratic spectral estimates.

\section{Statements and Declarations}\label{sec:declarations}

\subsection{Independent concurrent work}

During the preparation of this manuscript, we became aware of the
independent work of Lovitz and Wu \cite{LovitzWu2026}, which also uses
deterministic O'Donnell--Wu permutation lifts to realize a
free-probabilistic mechanism for minimum-output-entropy nonadditivity at
$p=1$.  Accordingly, we do not claim priority for the general conclusion
that deterministic permutation constructions can yield
finite-dimensional $p=1$ counterexamples.  The two works, however,
implement the spectral derandomization at different interfaces.
Lovitz and Wu work with the random-subspace model and
retain finer information about the one-copy output geometry through
matrix-valued support polynomials, leading to substantially sharper
quantitative parameters.

The present work instead follows Collins's mixed-unitary route directly.
Its one-copy analysis reduces to a uniform family of scalar self-adjoint
degree-two polynomials controlled by the length-two Haagerup inequality,
while the exact Gram and purity identities identify the finite-dimensional
geometry encoded by this scalar certificate.  Restricting the deterministic
permutation tuple to the nontrivial standard representation produces real
orthogonal Stinespring blocks, so the usual product--conjugate Bell witness
becomes a genuine tensor-square violation with a real Stinespring
realization.  The subsequent sharpness, finite near-saturation, and
reachable-direction results delineate what this scalar-radius method can
achieve and isolate the full output-body support function and its Gibbs
entropy program as the natural finer level of analysis.  Thus the two
approaches are complementary: the random-subspace route retains finer
output geometry and gives stronger quantitative control, whereas the
present route emphasizes a direct scalar quadratic interface, an exact
finite-dimensional certificate, and a transparent real self-tensor
realization.

\subsection{Use of Artificial Intelligence}

The initial mathematical motivation for the present approach arose from the authors' observation that the free-probabilistic proof of minimum-output-entropy nonadditivity developed by Collins and related works, in particular its use of Haagerup's inequality for random mixed-unitary channels, appeared especially amenable to derandomization.  In exploring this possibility, the authors used a large language model (LLM) to assist with literature search; this process brought the deterministic spectral-approximation results of O'Donnell and Wu \cite{ODonnellWu2020} to our attention.

The LLM was subsequently used to help the authors analyze the technical interface between the random free-probabilistic argument and the deterministic O'Donnell--Wu construction, contributing to the development and organization of the approach presented in this paper.  It was also used to assist with the interactive refinement of the manuscript's mathematical exposition, readability, and presentation.  All mathematical arguments, interpretations, references, and final editorial decisions were independently checked and approved by the authors, who take full responsibility for the scientific content of the paper.

\bibliographystyle{alpha}
\bibliography{ref}

\end{document}